\documentclass{article}

\PassOptionsToPackage{table}{xcolor}
\usepackage{iclr2027_conference,times}
\usepackage{amsmath,amssymb,amsthm}
\usepackage{booktabs}
\usepackage{array}
\usepackage{graphicx}
\usepackage{float}
\usepackage{microtype}
\usepackage{xcolor}
\usepackage{hyperref}
\usepackage{url}
\hypersetup{colorlinks=true,citecolor=blue!55!black,linkcolor=blue!55!black,urlcolor=blue!55!black}
\hypersetup{pdfauthor={Maojun Sun, Yancheng Yuan, Jian Huang, Ruijian Han},
  pdftitle={When Does Dense Retrieval Need Asymmetric Geometry? A Bias-Variance Theory of Shared and Dual Projections}}
\usepackage{multirow}

\graphicspath{{./figures/}}
\newtheorem{theorem}{Theorem}
\newtheorem{proposition}{Proposition}
\newtheorem{corollary}{Corollary}

\theoremstyle{remark}

\newcommand{\Ms}{\mathcal M_s}
\newcommand{\Md}{\mathcal M_d}
\newcommand{\Ss}{\mathcal S_r^+}
\newcommand{\Dd}{\mathcal D_r}
\newcommand{\E}{\mathbb E}
\newcommand{\norm}[1]{\lVert #1\rVert}
\newcommand{\ip}[2]{\langle #1,#2\rangle}
\newcommand{\sym}{\operatorname{sym}}
\newcommand{\rank}{\operatorname{rank}}
\newcommand{\tr}{\operatorname{tr}}

\title{When Does Dense Retrieval Need Asymmetric Geometry?\\
A Bias--Variance Theory of Shared and Dual Projections}

\author{Maojun Sun$^{1}$, Yancheng Yuan$^{1,\dagger}$, Jian Huang$^{1,\dagger}$, Ruijian Han$^{1,\dagger}$\\[3pt]
$^{1}$ The Hong Kong Polytechnic University\\
$^{\dagger}$ Joint corresponding author}

\iclrfinalcopy
\begin{document}
\maketitle
\lhead{Preprint}

\begin{abstract}
Dense retrieval powers retrieval-augmented generation, semantic search, and question answering, yet the theoretical basis for choosing between shared and dual query–document projections remains unclear. We introduce a bias–variance theory for low-rank bilinear scoring. Shared projections induce positive-semidefinite operators, whereas dual projections realize arbitrary low-rank operators. We derive their exact approximation gap and prove a local Gaussian boundary: dual has lower risk exactly when squared directional signal exceeds the estimation cost of its additional degrees of freedom. This boundary motivates the Cross-fitted Asymmetry Risk Selector (CARS), which estimates reproducible directional signal from training pairs; its Gaussian counterpart admits exact selection-power and regret formulas. Guided by the theory, we run retrieval experiments across multiple datasets and embedding models. The mean Dual-minus-Shared NDCG@10 advantage more than doubles as query rotation increases from \(0^\circ\) to \(90^\circ\). In the rank–sample-size grids, Shared wins 13 of 16 cells at \(n=32\), whereas Dual wins all 32 cells at \(n=1024\) and \(2048\). Consistent with this shift, all 168 comparable operator-risk curves move toward Dual as training data grow. Compared to the two fixed-geometry baselines, CARS reduces held-out regret by 49–96\% and achieves 90.1\% mean geometry-selection accuracy.
\end{abstract}

\section{Introduction}

Dense retrieval supplies evidence to question-answering and retrieval-augmented
generation systems and powers semantic search
\citep{karpukhin2020dpr,lewis2020rag,thakur2021beir}.  When adapting frozen
query and document embeddings in $\mathbb R^p$, one can apply the \emph{same}
rank-$r$ projection to both (\emph{shared}) or fit two projections
(\emph{dual}), where $1\le r\le p$. Throughout, “shared” applies the same projection to queries and documents, while “dual” uses separately parameterized projections to model directional differences between them. Their finite-sample tradeoff depends on signal strength and estimation noise. Figure~\ref{fig:share-vs-dual} illustrates the two paradigms.

\begin{figure}[htbp]
    \centering
    \includegraphics[width=\textwidth]{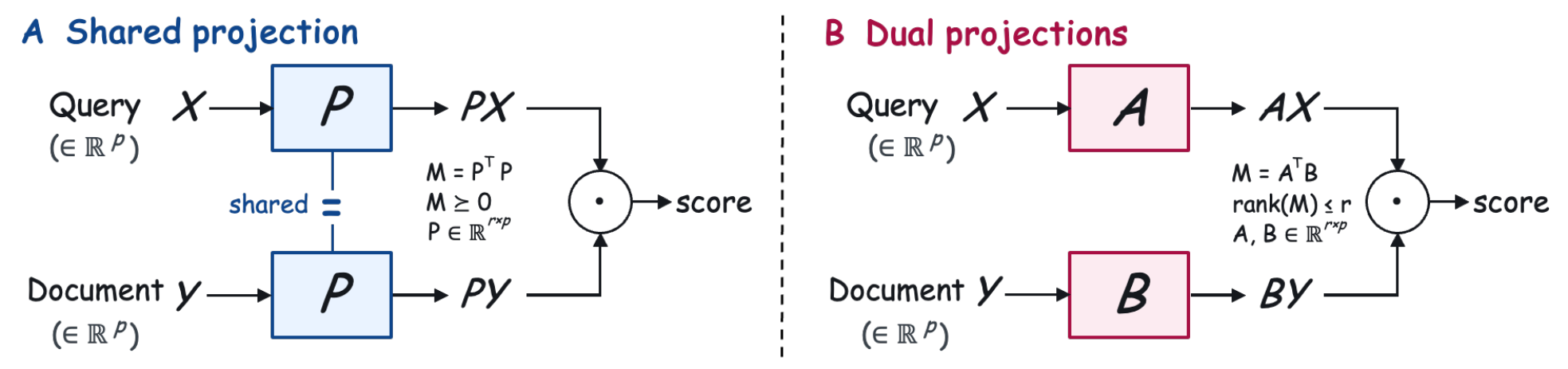}
    \caption{Shared and dual projections for dense retrieval. A shared map \(P\) induces a PSD operator \(M=P^\top P\), while dual maps \(A\) and \(B\) induce a general operator \(M=A^\top B\) of rank at most \(r\).}
    \label{fig:share-vs-dual}
\end{figure}

The key distinction is geometric. With inner-product scoring, a shared
projection $P\in\mathbb R^{r\times p}$ gives the similarity matrix
$M=P^\top P$, which is positive semidefinite (PSD) because
$v^\top Mv=\|Pv\|_2^2\ge0$ for every $v$. Separate query and document
projections $A,B\in\mathbb R^{r\times p}$ instead give $M=A^\top B$,
which can represent any rank-at-most-$r$ matrix and need not be symmetric.
This flexibility reduces approximation bias but opens more directions to
training noise. We ask \emph{when directional signal pays for those extra
directions}, and whether the answer can be estimated from training data.

Our contributions are summarized as follows. (i) We characterize the operator classes induced by shared and dual rank-\(r\) projections and derive the exact approximation loss imposed by shared projections. (ii) We establish a local Gaussian bias–variance boundary that quantifies when directional signal outweighs the additional estimation cost of dual projections. (iii) We derive a Stein-unbiased rule with exact selection power and regret in the local Gaussian model, then introduce CARS as a cross-fitted selector for real embeddings. (iv) We examine the predicted signal and sample-size effects through controlled simulations, full-corpus retrieval, and held-out operator-risk experiments on real embeddings and datasets.

\paragraph{Related Work.}
Shared PSD metric learning and unconstrained bilinear similarity learning
study the two scoring families
\citep{weinberger2009lmnn,davis2007itml,chechik2010oasis}.
Dense-retrieval work compares shared and asymmetric dual encoders and
develops methods for adapting pretrained embeddings
\citep{dong2022dual,yoon2024search,maekawa2026era}.
CCA, low-rank geometry, and Stein risk estimation provide related
mathematical tools
\citep{hotelling1936cca,bach2005pcca,edelman1998geometry,candes2013sure,stein1981sure}.
What remains unresolved is a theoretical criterion for when the flexibility
of dual projections justifies their additional estimation error.
We establish an exact boundary in the local Gaussian operator-risk model:
Dual has lower risk precisely when its squared directional signal exceeds
the variance cost of its additional degrees of freedom.
We then develop risk-based selection rules and examine this tradeoff in
retrieval and held-out operator-risk experiments.

\section{Operator geometry and approximation}

Let $X,Y\in\mathbb R^p$ denote frozen query and document embeddings, and
let $Y^+,Y^-$ be a positive and a negative document for query $X$.
With $D=Y^+-Y^-$, the bilinear margin associated with
$M\in\mathbb R^{p\times p}$ is
\[
m_M(X,Y^+,Y^-)=X^\top MD=\ip{M}{XD^\top}_F.
\]
Here $\langle\cdot,\cdot\rangle_F$ and $\|\cdot\|_F$ denote the
Frobenius inner product and norm.
The two rank-constrained families are
\begin{align*}
\Ms(r)&=\{M=M^\top,\ M\succeq0,\ \rank(M)\le r\},\\
\Md(r)&=\{M:\rank(M)\le r\}.
\end{align*}
A shared projection induces $M=P^\top P\in\Ms(r)$; dual projections induce
$M=A^\top B\in\Md(r)$.  Conversely, if
$M=U\Sigma V^\top$ has rank at most $r$, padded factors
$A=\Sigma^{1/2}U^\top$ and $B=\Sigma^{1/2}V^\top$ satisfy $A^\top B=M$.
Thus $\Ms(r)\subsetneq\Md(r)$.

Let a population target be $M_\star=H+K$, where
\[
H=\frac{M_\star+M_\star^\top}{2},\qquad
K=\frac{M_\star-M_\star^\top}{2}.
\]
Write $H=U\operatorname{diag}(\eta_1,\ldots,\eta_p)U^\top$ with
$\eta_1\ge\cdots\ge\eta_p$, and let $J_r$ index the largest at most $r$ positive eigenvalues.
We first quantify the approximation error of requiring the scoring operator to be shared.

\begin{theorem}[Exact shared approximation]\label{thm:projection}
A Frobenius-nearest member of $\Ms(r)$ is
$
S_r^+=U\operatorname{diag}(\eta_j\mathbf 1\{j\in J_r\})U^\top,
$
and the minimum squared error is
\begin{equation}
\norm K_F^2+
\sum_{\eta_j<0}\eta_j^2+
\sum_{\substack{\eta_j>0, j\notin J_r}}\eta_j^2.
\label{eq:projection}
\end{equation}
The minimum over $\Md(r)$ is $\sum_{j>r}\sigma_j(M_\star)^2$;
the exact Shared-minus-Dual approximation gap is
Equation~\eqref{eq:projection} minus this singular-value tail.
\end{theorem}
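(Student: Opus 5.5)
The plan has three steps: split off the skew part by orthogonality, reduce the remaining symmetric problem to an eigenvalue problem, and quote Eckart--Young for the dual class.

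\textbf{Step 1 (skew part).} Every $S\in\Ms(r)$ is symmetric. Symmetric and skew-symmetric matrices are Frobenius-orthogonal, since $\ip{S}{K}_F=\tr(S^\top K)=\tr(SK)$ and $\tr(SK)=\tr((SK)^\top)=\tr(K^\top S)=-\tr(KS)=-\tr(SK)$. Hence
\[
\norm{M_\star-S}_F^2=\norm{H-S}_F^2+\norm K_F^2 .
\]
So it suffices to minimize $\norm{H-S}_F^2$ over $S\in\Ms(r)$, and the term $\norm K_F^2$ in Equation~\eqref{eq:projection} is accounted for.

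\textbf{Step 2 (symmetric PSD problem).} Rotating by $U$ preserves the Frobenius norm, so we may take $H=\Lambda=\operatorname{diag}(\eta)$ and $S$ PSD of rank at most $r$. The lower bound is the main obstacle, because the optimal $S$ need not be assumed diagonal.

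First I would expand
\[
\norm{\Lambda-S}_F^2=\sum_j\eta_j^2-2\tr(\Lambda S)+\norm S_F^2 .
\]
Let $S$ have eigenvalues $s_1\ge\cdots\ge s_r\ge0$, with zeros thereafter. Von Neumann's trace inequality (Ky Fan for symmetric matrices) gives $\tr(\Lambda S)\le\sum_{i}\eta_i s_i$, with both lists in decreasing order. Then
\[
\norm{\Lambda-S}_F^2\ge\sum_{i\le r}(\eta_i-s_i)^2+\sum_{i>r}\eta_i^2 .
\]
Next, minimize over $s_i\ge0$ one coordinate at a time. The minimizer is $s_i=\max(\eta_i,0)$, which gives
\[
\sum_{i\le r}\min(\eta_i,0)^2+\sum_{i>r}\eta_i^2 .
\]
Because the $\eta_i$ are sorted, the indices $i\le r$ with $\eta_i>0$ are exactly $J_r$. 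The remaining indices split into three groups: negative eigenvalues (contributing $\eta_j^2$), positive eigenvalues outside $J_r$ (contributing $\eta_j^2$), and zero eigenvalues (contributing nothing). This is exactly the sum in Equation~\eqref{eq:projection}.

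\textbf{Step 3 (attainment and the dual class).} The bound is attained by $S_r^+=U\operatorname{diag}(\eta_j\mathbf 1\{j\in J_r\})U^\top$. This matrix is PSD with rank $|J_r|\le r$, and its error is computed directly from the diagonal form. Adding $\norm K_F^2$ back proves the first two claims.

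For $\Md(r)$, the Eckart--Young--Mirsky theorem says the truncated SVD is Frobenius-optimal, with error $\sum_{j>r}\sigma_j(M_\star)^2$. The gap statement then follows by subtracting the two minima. Since $\Ms(r)\subseteq\Md(r)$, the gap is nonnegative.
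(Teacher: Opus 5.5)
Your proof is correct and follows essentially the same route as the paper: you split off $\norm K_F^2$ by symmetric/skew orthogonality, reduce to sorted eigenvalues (your trace-inequality bound $\norm{\Lambda-S}_F^2\ge\sum_i(\eta_i-s_i)^2$ is precisely the Hoffman--Wielandt step the paper cites to align eigenvectors), and then minimize coordinatewise under $s_i\ge0$ with at most $r$ nonzeros. The only difference is cosmetic: you quote Eckart--Young--Mirsky for $\Md(r)$, whereas the paper re-derives that tail bound directly from von Neumann's inequality and Cauchy--Schwarz.
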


\begin{figure}[htbp]
\centering
\includegraphics[width=\linewidth]{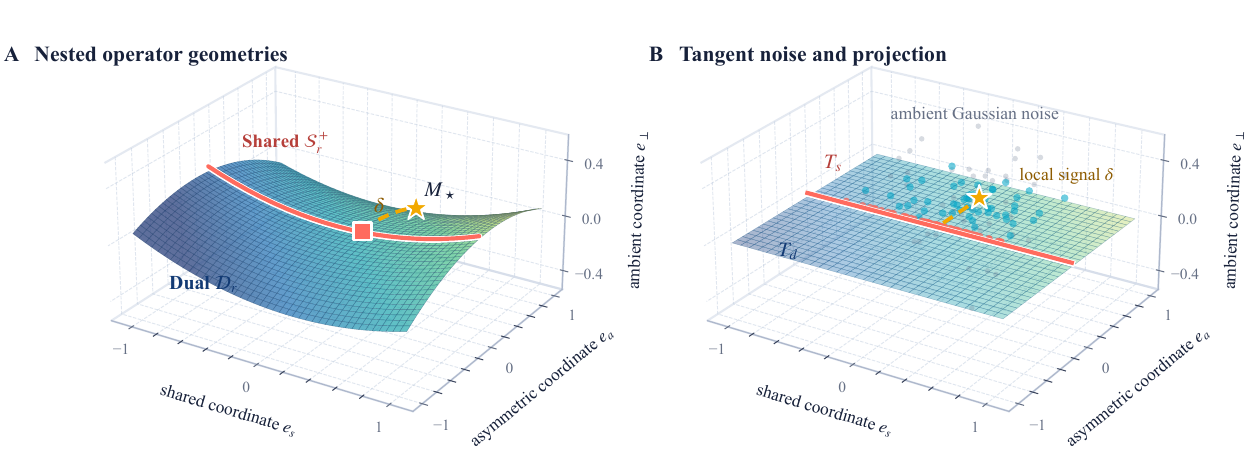}
\caption{Geometry behind the boundary. {\textbf{A:} the shared PSD
family lies inside the dual family; $\delta$ schematically marks a target's
departure from sharing.} \textbf{B:} locally, sharing
removes $k$ dual-only noise coordinates but also discards signal along them;
$k$ is the extra dimension derived in Section~\ref{sec:boundary}.}
\label{fig:manifold-geometry}
\end{figure}

The three terms in \eqref{eq:projection} are skew energy, negative eigenvalues, and discarded positive
eigenvalues. Appendix~\ref{app:projection} gives the proof.
Theorem~\ref{thm:projection} concerns approximation with a known target. Figure~\ref{fig:manifold-geometry}A illustrates this nesting: a target outside the shared family can be represented more accurately by dual projections.

\paragraph{Retrieval interpretation.}
For centered query--positive-document pairs $(X,Y^+)$, let
$\Sigma_x,\Sigma_y$ be their covariances, and let $Y^-$ be an independent
centered document with covariance $\Sigma_y$. The relevance moment is
$C=\E[X(Y^+-Y^-)^\top]=\E[XY^{+\top}]$. After whitening the two views,
take $M_\star=T=\Sigma_x^{-1/2}C\Sigma_y^{-1/2}$ in
Theorem~\ref{thm:projection}. Its Shared-minus-Dual approximation gap
then equals the gap in squared optimal rank-$r$ positive--negative
separation under a unit negative-score second-moment constraint.
Appendix~\ref{app:deflection} proves this equivalence and gives a
rotation example in which the population advantage of Dual grows with
query--document mismatch. With finitely many training pairs, however,
Dual can also fit noise in its extra directions. We next ask whether a
distribution-free bound quantifies that estimation cost.

\begin{proposition}[Global complexity bound]
\label{prop:global}
For $n$ training triples, let $D_i=Y_i^+-Y_i^-$ and
$W_i=X_iD_i^\top$. Suppose $\norm{W_i}_F\le R$, and restrict
$\norm M_F\le B$. For $j\in\{s,d\}$, define the empirical Rademacher
complexity of the corresponding linear-score class as
\[
\mathfrak R_{n,j}=\E_\epsilon\sup_{\substack{M\in\mathcal M_j(r)\\
\norm M_F\le B}}
\ip{M}{\frac1n\sum_{i=1}^n\epsilon_iW_i}_F.
\]
The $\epsilon_i$ are independent uniform signs in $\{-1,+1\}$.
If $Q=n^{-1}\sum_i\epsilon_iW_i$ and $Q_r$ is its rank-$r$ SVD truncation,
then $\mathfrak R_{n,s}\le\mathfrak R_{n,d}=B\E_\epsilon\norm{Q_r}_F\le{BR}/{\sqrt n}.$
\end{proposition}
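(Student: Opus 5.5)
The proof splits into three pieces: the inequality $\mathfrak R_{n,s}\le\mathfrak R_{n,d}$, the identity for $\mathfrak R_{n,d}$, and the final $BR/\sqrt n$ bound.

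\textbf{Step 1: the inequality.} Since $\Ms(r)\subseteq\Md(r)$, as established just before Theorem~\ref{thm:projection}, the feasible set of the inner supremum for $j=s$ is contained in that for $j=d$. So for every sign vector $\epsilon$ the shared supremum is at most the dual one. Taking $\E_\epsilon$ gives $\mathfrak R_{n,s}\le\mathfrak R_{n,d}$.

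\textbf{Step 2: the identity.} Fix $\epsilon$ and evaluate $\sup\{\ip MQ_F:\rank M\le r,\ \norm M_F\le B\}$ exactly. Write $Q=\sum_j\sigma_ju_jv_j^\top$.
\begin{itemize}
\item \emph{Upper bound.} For $M$ of rank at most $r$, let $\Pi$ be the orthogonal projector onto the column space of $M$ and $\Pi'$ the projector onto its row space, each of rank at most $r$. Then $\ip MQ_F=\ip M{\Pi Q\Pi'}_F\le\norm M_F\norm{\Pi Q\Pi'}_F$. Next, $\norm{\Pi Q\Pi'}_F^2=\sum_j\sigma_j(\Pi Q\Pi')^2$. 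The matrix $\Pi Q\Pi'$ has rank at most $r$, and its singular values satisfy $\sigma_j(\Pi Q\Pi')\le\sigma_j(Q)$ by the interlacing/contraction property of multiplying by projections. Hence $\norm{\Pi Q\Pi'}_F^2\le\sum_{j\le r}\sigma_j^2=\norm{Q_r}_F^2$. (Equivalently, this is Ky Fan's characterization of the top-$r$ Frobenius energy.)
\item \emph{Attainment.} If $Q_r\neq0$, take $M=BQ_r/\norm{Q_r}_F$. Because $Q-Q_r$ is orthogonal to $Q_r$ in the Frobenius inner product, $\ip MQ_F=B\norm{Q_r}_F$. If $Q_r=0$, take $M=0$.
\end{itemize}
Thus the supremum equals $B\norm{Q_r}_F$ for each $\epsilon$, and taking expectations yields $\mathfrak R_{n,d}=B\E_\epsilon\norm{Q_r}_F$. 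The only real content is the upper bound, namely the step bounding the Frobenius norm of the two-sided compression by the top-$r$ singular energy.

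\textbf{Step 3: the $BR/\sqrt n$ bound.} Use $\norm{Q_r}_F\le\norm Q_F$ and Jensen's inequality:
\[
\E_\epsilon\norm Q_F\le\bigl(\E_\epsilon\norm Q_F^2\bigr)^{1/2}
=\frac1n\Bigl(\sum_{i,k}\E[\epsilon_i\epsilon_k]\ip{W_i}{W_k}_F\Bigr)^{1/2}
=\frac1n\Bigl(\sum_i\norm{W_i}_F^2\Bigr)^{1/2}\le\frac{R}{\sqrt n}.
\]
The cross terms vanish because the signs are independent with mean zero. Multiplying by $B$ gives $\mathfrak R_{n,d}\le BR/\sqrt n$, which completes the chain.
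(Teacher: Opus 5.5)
Your proof is correct and has the same three steps as the paper's: the inclusion $\Ms(r)\subseteq\Md(r)$ gives $\mathfrak R_{n,s}\le\mathfrak R_{n,d}$, the inner supremum is evaluated exactly as $B\norm{Q_r}_F$, and Jensen with vanishing Rademacher cross terms gives the $R/\sqrt n$ ceiling. The differences are minor. The paper gets the top-$r$ upper bound from von Neumann's trace inequality, whereas you compress $Q$ by the column and row projectors of $M$ and use $\sigma_j(\Pi Q\Pi')\le\sigma_j(Q)$. You also spell out the attainment at $M=BQ_r/\norm{Q_r}_F$, which the paper leaves implicit.
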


Proposition~\ref{prop:global} preserves the ordering $\mathfrak R_{n,s}\le\mathfrak R_{n,d}$, but bounds both by the same coarse ceiling $BR/\sqrt n$. This ceiling does not quantify the extra estimation cost of dual's additional directions. The local noise model in Section~\ref{sec:boundary} resolves that difference; Appendix~\ref{app:global} gives the proof and a Lipschitz-loss extension.

\section{The bias--variance boundary}\label{sec:boundary}

To measure that cost, we consider a local Gaussian experiment
around a rank-$r$ shared operator: an observed population operator is
perturbed by isotropic noise of scale $\sigma/\sqrt n$, while directional
departure from the shared family is also of order $n^{-1/2}$. This common
scale makes approximation gain and estimation cost directly comparable.

Define the regular rank-$r$ manifolds
\[
\Dd=\{M:\rank(M)=r\},\qquad
\Ss=\{S=S^\top\succeq0:\rank(S)=r\}.
\]
Let $S=U\Lambda U^\top\in\Ss$, where the columns of $U\in\mathbb R^{p\times r}$
are orthonormal and $\Lambda\in\mathbb R^{r\times r}$ is positive definite.
Let $T_d$ and $T_s$ denote the tangent spaces to $\Dd$ and $\Ss$ at $S$,
respectively, and let $\Pi_T$ denote Frobenius-orthogonal projection onto $T$.
Assume that $M_n\in\Dd$ is a sequence of local alternatives with
$M_n=S+n^{-1/2}H+O(n^{-1})$ for fixed $H\in T_d$. We observe
\[
Y_n=M_n+\frac{\sigma}{\sqrt n}G_n,
\]
where $\sigma>0$ and $G_n\in\mathbb R^{p\times p}$ has independent
standard Gaussian entries. In retrieval, \(Y_n\) is a local Gaussian model for the empirical relevance
moment \(\widehat C_n=n^{-1}\sum_{i=1}^n X_iD_i^\top\), while \(M_n\)
represents its rank-\(r\) population target.
Let $\widehat M_d$ and $\widehat M_s$ be
Frobenius-nearest points to $Y_n$ in the closed families $\Md(r)$ and
$\Ms(r)$, respectively.

\begin{theorem}[Local estimation cost]\label{thm:localrisk}
In the local model above, $T_s\subset T_d$, $\dim T_d=2pr-r^2$, and
$\dim T_s=pr-r(r-1)/2$. The number of dual-only tangent directions is
\begin{equation}
k=\dim T_d-\dim T_s=\frac{r(2p-r-1)}{2}.
\label{eq:k}
\end{equation}
Moreover, as $n\to\infty$,
\begin{align*}
n\E\norm{\widehat M_d-M_n}_F^2&\to\sigma^2\dim T_d,\\
n\E\norm{\widehat M_s-M_n}_F^2&\to\norm{\Pi_{T_s^\perp}H}_F^2+\sigma^2\dim T_s.
\end{align*}
\end{theorem}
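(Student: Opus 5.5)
The proof splits into a linear-algebraic part (the tangent spaces and their dimensions) and an asymptotic part (local linearization of the two nearest-point maps).

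\textbf{Tangent spaces.} Let $U_\perp$ complete $U$ to an orthonormal basis, and write matrices in the block coordinates $(U,U_\perp)$.

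At $S=U\Lambda U^\top$ with $\Lambda\succ0$, the standard description of the fixed-rank manifold gives
\[
T_d=\{UA+BU^\top:\ A\in\mathbb R^{r\times p},\ B\in\mathbb R^{p\times r}\}.
\]
In block form this means the $(1,1)$, $(1,2)$ and $(2,1)$ blocks are free and the $(2,2)$ block is zero. Hence $\dim T_d=p^2-(p-r)^2=2pr-r^2$.

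Differentiating curves $U(t)\Lambda(t)U(t)^\top$ in $\Ss$ gives
\[
T_s=\{UE U^\top+UF^\top U_\perp^\top+U_\perp F U^\top:\ E=E^\top,\ F\in\mathbb R^{(p-r)\times r}\}.
\]
These are exactly the symmetric elements of $T_d$, so $T_s\subset T_d$. Its dimension is $r(r+1)/2+r(p-r)=pr-r(r-1)/2$. Subtracting gives $k=r(2p-r-1)/2$. The directions in $T_d$ orthogonal to $T_s$ are the skew part of the $(1,1)$ block together with the antisymmetric combinations of the off-diagonal blocks.

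\textbf{Local risk, dual.} Write $Y_n=S+n^{-1/2}Z_n+O(n^{-1})$ with $Z_n=H+\sigma G_n$. Since $\sigma_r(S)>0$, the rank-$r$ SVD truncation is smooth near $S$, with derivative $\Pi_{T_d}$. Hence
\[
\sqrt n(\widehat M_d-S)=\Pi_{T_d}Z_n+O_P(n^{-1/2}).
\]
Because $H\in T_d$, this yields $\sqrt n(\widehat M_d-M_n)=\sigma\Pi_{T_d}G_n+o_P(1)$, and $\E\norm{\Pi_{T_d}G_n}_F^2=\dim T_d$.

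\textbf{Local risk, shared.} Locally, the projection onto $\Ms(r)$ is $Y\mapsto$ the PSD rank-$r$ truncation of $\sym Y$. Near $S$ the top $r$ eigenvalues of $\sym Y$ are positive and separated from the rest, so this map is smooth with derivative $\Pi_{T_s}$. This follows either from the general nearest-point-map result for a $C^2$ submanifold, or directly from Theorem~\ref{thm:projection} plus eigenvalue perturbation. Then
\[
\sqrt n(\widehat M_s-M_n)=\Pi_{T_s}H-H+\sigma\Pi_{T_s}G_n+o_P(1)=-\Pi_{T_s^\perp}H+\sigma\Pi_{T_s}G_n+o_P(1).
\]
Here I used $\Pi_{T_s^\perp}H=H-\Pi_{T_s}H$, which holds because $H\in T_d$ and $T_s\subset T_d$. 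The cross term has mean zero, so the second moment is $\norm{\Pi_{T_s^\perp}H}_F^2+\sigma^2\dim T_s$.

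\textbf{Main obstacle: convergence of the second moments.} The linearization only gives convergence in distribution, so it does not by itself justify passing the limit through the expectation. I would get uniform integrability as follows.

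For both estimators, the nearest-point property against the competitor $S\in\Ms(r)\subset\Md(r)$ gives
\[
\norm{\widehat M-Y_n}_F\le\norm{S-Y_n}_F.
\]
By the triangle inequality, $\sqrt n\norm{\widehat M-M_n}_F\le 2\sqrt n\norm{S-Y_n}_F+\sqrt n\norm{Y_n-M_n}_F$. The right-hand side is bounded by $C(\norm H_F+\sigma\norm{G_n}_F+1)$, whose square has uniformly bounded expectation (indeed all moments are bounded). This gives uniform integrability of $n\norm{\widehat M-M_n}_F^2$.

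A second, smaller point is measurability and uniqueness of the nearest point on the set where the eigengaps hold. Off that set, which has probability $e^{-cn}$, the bound above controls the contribution. Combining the linearization with this domination argument yields both limits.
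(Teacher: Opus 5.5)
Your proposal is correct and takes essentially the same route as the paper. Both use the same block description of $T_d$ and $T_s$, linearize the nearest-point maps with derivative $\Pi_{T_j}$, and dominate $n\norm{\widehat M_j-M_n}_F^2$ using the nearest-point comparison with $S\in\Ms(r)\subset\Md(r)$. The only difference is how the derivative is justified: you use the closed-form projections (SVD truncation, and via Theorem~\ref{thm:projection} the positive eigen-truncation of $\sym Y$), whereas the paper uses a generic chart/normal-equation argument and then checks that global nearest points stay at rank $r$.
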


Let $\mathcal A=T_d\cap T_s^\perp$ denote the dual-only directions, so
$\dim\mathcal A=k$. The extra dimension decomposes as
$k=r(r-1)/2+r(p-r)$: the first term corresponds to skew perturbations within
the active $r$-dimensional subspace, while the second allows the left and
right cross-subspace perturbations to differ.

\begin{corollary}[Local phase boundary]\label{cor:phase}
Let $\delta^2=\norm{\Pi_{\mathcal A}H}_F^2$. The dual estimator has lower local
asymptotic risk than the shared estimator if and only if
\begin{equation}
\delta^2>\sigma^2k.\label{eq:oracleboundary}
\end{equation}
For
$\Delta_n=M_n-S=n^{-1/2}H+O(n^{-1})$, $n\norm{\Pi_{\mathcal A}\Delta_n}_F^2
\rightarrow\delta^2$. Away from equality, the same first-order
decision compares $n\norm{\Pi_{\mathcal A}\Delta_n}_F^2$ with $\sigma^2k$.
\end{corollary}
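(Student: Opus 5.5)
The plan is to read Corollary~\ref{cor:phase} off the two risk limits in Theorem~\ref{thm:localrisk}, then translate the result into the variables of the statement. Theorem~\ref{thm:localrisk} gives
\[
\lim_n n\E\norm{\widehat M_s-M_n}_F^2-\lim_n n\E\norm{\widehat M_d-M_n}_F^2
=\norm{\Pi_{T_s^\perp}H}_F^2-\sigma^2(\dim T_d-\dim T_s).
\]
By \eqref{eq:k}, the variance part of this difference is $-\sigma^2k$. So Dual has strictly lower local asymptotic risk exactly when $\norm{\Pi_{T_s^\perp}H}_F^2>\sigma^2k$.

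The next step identifies this quantity with $\delta^2$. Since $H\in T_d$ and $T_s\subset T_d$ (both from Theorem~\ref{thm:localrisk}), we can split $T_d=T_s\oplus\mathcal A$ orthogonally, with $\mathcal A=T_d\cap T_s^\perp$. Hence $\Pi_{T_s^\perp}H=H-\Pi_{T_s}H$ lies in $T_d$ and is orthogonal to $T_s$. It therefore lies in $\mathcal A$, and in fact $\Pi_{T_s^\perp}H=\Pi_{\mathcal A}H$. This turns the criterion into \eqref{eq:oracleboundary}. The ``if and only if'' holds as stated, since at equality the two limits coincide and neither estimator has lower risk.

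For the convergence claim, $\Pi_{\mathcal A}$ is linear with operator norm at most one. Writing $\Delta_n=n^{-1/2}H+E_n$ with $\norm{E_n}_F=O(n^{-1})$, we get
\[
\sqrt n\,\Pi_{\mathcal A}\Delta_n=\Pi_{\mathcal A}H+O(n^{-1/2}).
\]
Squaring norms then gives $n\norm{\Pi_{\mathcal A}\Delta_n}_F^2\to\delta^2$.

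For the final sentence, suppose $\delta^2\neq\sigma^2k$. Then for all large $n$, the sign of $n\norm{\Pi_{\mathcal A}\Delta_n}_F^2-\sigma^2k$ equals the sign of $\delta^2-\sigma^2k$. The finite-$n$ comparison thus reproduces the asymptotic decision.

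No step is genuinely hard, since the substantive work is in Theorem~\ref{thm:localrisk}. The one point to check carefully is the identification $\Pi_{T_s^\perp}H=\Pi_{\mathcal A}H$. It uses both $H\in T_d$ and the nesting $T_s\subset T_d$; without $H\in T_d$, part of $H$ would lie outside both tangent spaces and would not count toward $\delta^2$.
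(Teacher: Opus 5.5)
Your proposal is correct and follows essentially the same route as the paper. Like the paper, you subtract the two risk limits of Theorem~\ref{thm:localrisk}, identify $\Pi_{T_s^\perp}H$ with $\Pi_{\mathcal A}H$ through the orthogonal split $T_d=T_s\oplus\mathcal A$ and $H\in T_d$, and use linearity and boundedness of the projector to get $n\norm{\Pi_{\mathcal A}\Delta_n}_F^2=\delta^2+O(n^{-1/2})$; you only spell out the projection identification and the sign argument in the final sentence more explicitly than the paper does.
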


The boundary prices the $k$ additional noisy coordinates against
the signal they can capture. Figure~\ref{fig:manifold-geometry}B depicts
this local tradeoff: sharing removes variance in $\mathcal A$ but also
discards the signal there. The synthetic crossing in
Figure~\ref{fig:phase} tests the corresponding signal--sample-size
prediction.

\subsection{A selection rule from noisy data}

Let \(Z_n=\sqrt n\,\Pi_{T_d}(Y_n-S)\). Since \(H\in T_d\), \(Z_n=H+\sigma\,\Pi_{T_d}G_n+O(n^{-1/2})\), so \(Z_n\) converges in distribution to \(Z=H+\sigma G_d\), where \(G_d\) is standard Gaussian noise in \(T_d\). We state the selection rule for this limit model, in which the risks of the two projections match Theorem~\ref{thm:localrisk}.

The boundary in Corollary~\ref{cor:phase} depends on the unknown
\(\delta^2\). Its plug-in estimate \(\norm{\Pi_{\mathcal A}Z}_F^2\) is
biased upward, since its expectation is \(\delta^2+\sigma^2k\). We
correct this bias with Stein's unbiased risk estimate (SURE)
\citep{stein1981sure}, treating \(\sigma\) as known. For
\(j\in\{s,d\}\), the estimator \(\widehat H_j=\Pi_{T_j}Z\) has risk
\(R_j(H)=\E\norm{\widehat H_j-H}_F^2\), and
\[
\widehat R_j=\norm{Z-\Pi_{T_j}Z}_F^2+\sigma^2\bigl(2\dim T_j-\dim T_d\bigr)
\]
is an unbiased estimate of \(R_j(H)\). Let \(\widehat\pi\in\{s,d\}\)
select the family with the smaller \(\widehat R_j\), with ties resolved
in favor of \(s\).

\begin{theorem}[SURE selection and model-choice regret]
\label{thm:sure}
In the model above, \(R_s(H)-R_d(H)=\delta^2-\sigma^2k\), and
\(\widehat\pi=d\) if and only if
\begin{equation}
\norm{\Pi_{\mathcal A}Z}_F^2>2\sigma^2k.
\label{eq:surethreshold}
\end{equation}
Moreover,
\begin{equation}
\Pr(\widehat\pi=d)=\Pr\{\chi_k^2(\lambda)>2k\},
\qquad \lambda=\delta^2/\sigma^2,
\label{eq:power}
\end{equation}
where \(\chi_k^2(\lambda)\) is a noncentral chi-squared variable with
\(k\) degrees of freedom and noncentrality \(\lambda\). Define the
model-choice regret as \(R_{\widehat\pi}(H)-\min\{R_s(H),R_d(H)\}\).
Its expectation is
\begin{equation}
|\delta^2-\sigma^2k|\times
\begin{cases}
\Pr\{\chi_k^2(\lambda)>2k\}, & \delta^2<\sigma^2k,\\
\Pr\{\chi_k^2(\lambda)\le 2k\}, & \delta^2>\sigma^2k,
\end{cases}
\label{eq:regret}
\end{equation}
and it is zero when \(\delta^2=\sigma^2k\).
\end{theorem}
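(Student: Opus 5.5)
The argument reduces everything to the Gaussian vector $Z=H+\sigma G_d$ on the linear space $T_d$ and the orthogonal splitting $T_d=T_s\oplus\mathcal A$, with $T_s\subset T_d$ supplied by Theorem~\ref{thm:localrisk}. We pick orthonormal bases of $T_s$ (dimension $\dim T_s$) and of $\mathcal A$ (dimension $k$). In these coordinates $\Pi_{T_s}Z$ and $\Pi_{\mathcal A}Z$ are independent Gaussian vectors, namely $\Pi_{T_s}H+\sigma\Pi_{T_s}G_d$ and $\Pi_{\mathcal A}H+\sigma\Pi_{\mathcal A}G_d$, and $\Pi_{\mathcal A}G_d$ is standard Gaussian in $\mathbb R^k$.

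\emph{Risks.} Since $\widehat H_d=\Pi_{T_d}Z=Z$, we get $R_d=\sigma^2\dim T_d$. For the shared estimator, $\widehat H_s-H=\sigma\Pi_{T_s}G_d-\Pi_{\mathcal A}H$, and the two pieces lie in orthogonal subspaces. Hence $R_s=\sigma^2\dim T_s+\delta^2$, which matches Theorem~\ref{thm:localrisk}. Subtracting and using $k=\dim T_d-\dim T_s$ gives $R_s-R_d=\delta^2-\sigma^2k$. Unbiasedness of $\widehat R_j$ can be checked directly, or by citing SURE for the linear estimator with divergence $\dim T_j$; it is not needed for the remaining claims.

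\emph{Selection rule.} Because $Z\in T_d$, we have $\widehat R_d=\sigma^2\dim T_d$ and $\widehat R_s=\|\Pi_{\mathcal A}Z\|_F^2+\sigma^2(2\dim T_s-\dim T_d)$. With ties resolved toward $s$, $\widehat\pi=d$ exactly when $\widehat R_s>\widehat R_d$, that is, when $\|\Pi_{\mathcal A}Z\|_F^2>2\sigma^2(\dim T_d-\dim T_s)=2\sigma^2k$, which is \eqref{eq:surethreshold}. Dividing by $\sigma^2$, the quantity $\|\Pi_{\mathcal A}Z\|_F^2/\sigma^2$ is the squared norm of a $k$-dimensional Gaussian with mean $\Pi_{\mathcal A}H/\sigma$ and identity covariance. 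It is therefore $\chi^2_k(\lambda)$ with $\lambda=\delta^2/\sigma^2$, which gives \eqref{eq:power}.

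\emph{Regret.} The point to be careful about is what ``regret'' means here, because $R_{\widehat\pi}(H)$ is the risk of a data-dependent choice. I read it, as the statement does, as the fixed risk function $R_j(H)$ evaluated at the random index $\widehat\pi$. I do not read it as the risk of the post-selection estimator, which would need a different computation involving correlation between $\widehat\pi$ and the error.

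\begin{itemize}
\item If $\delta^2<\sigma^2k$, the minimum is $R_s$. The regret equals $R_d-R_s=\sigma^2k-\delta^2$ on the event $\{\widehat\pi=d\}$ and is zero otherwise. Its expectation is therefore $|\delta^2-\sigma^2k|\Pr\{\chi^2_k(\lambda)>2k\}$.
\item If $\delta^2>\sigma^2k$, the regret is $\delta^2-\sigma^2k$ on $\{\widehat\pi=s\}$, whose probability is $\Pr\{\chi^2_k(\lambda)\le2k\}$.
\item At equality the two risks coincide, so the regret is identically zero.
\end{itemize}

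This gives \eqref{eq:regret}. The only genuine input beyond the linear algebra is the orthogonality and independence of the $T_s$ and $\mathcal A$ components of $G_d$, which holds because $G_d$ is isotropic on $T_d$.
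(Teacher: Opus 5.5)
Your proposal is correct and takes essentially the same route as the paper. Both work in orthonormal coordinates of $T_d=T_s\oplus\mathcal A$, reduce the SURE comparison to $\norm{\Pi_{\mathcal A}Z}_F^2>2\sigma^2k$, identify its noncentral $\chi^2_k(\lambda)$ law, and obtain the regret by multiplying the fixed-model risk gap $\delta^2-\sigma^2k$ by the probability of choosing the worse family; the only difference is that the paper also checks unbiasedness of $\widehat R_j$ directly, which you rightly note is not needed for the stated claims.
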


The threshold \(2\sigma^2k\) combines two terms: one \(\sigma^2k\)
removes the expected noise energy in \(\mathcal A\), and the other is
the boundary itself. Thus, unlike the plug-in rule
\(\norm{\Pi_{\mathcal A}Z}_F^2>\sigma^2k\), SURE rarely selects dual
without dual-only signal: when \(\delta=0\), the probability is at most
\((2/e)^{k/2}\). Model-choice regret weights the risk gap by the
probability of selecting the worse procedure; it is not the
post-selection risk of \(\widehat H_{\widehat\pi}\).

\subsection{CARS: Cross-fitted Asymmetry Risk Selector}\label{sec:cars}

In the local Gaussian experiment, SURE gives an unbiased estimate of
the Shared--Dual risk difference when $S$, $T_s$, $T_d$, and $\sigma$
are specified. For real embeddings, the population geometry and noise
covariance are unknown. We therefore use labeled training triples to
measure how consistently the dual--shared fit difference appears across
disjoint training halves, while penalizing its sampling variation.
For a training subset $I$, define the relevance moment and the
dual--shared fit difference
\[
\widehat C_I=\frac{1}{|I|}\sum_{i\in I}X_i(Y_i^+-Y_i^-)^\top,
\qquad
V_I=\mathcal P_{d,r}(\widehat C_I)-\mathcal P_{s,r}(\widehat C_I),
\]
where $\mathcal P_{d,r}$ is rank-$r$ SVD truncation and
$\mathcal P_{s,r}$ retains the largest $r$ positive eigenvalues of the
symmetric part, as in Theorem~\ref{thm:projection}. For each of $L$
random disjoint half-splits $(I_{b,1},I_{b,2})$ of the training set, the
\emph{Cross-fitted Asymmetry Risk Selector} (CARS) computes
\begin{equation}
\widehat\Gamma_{\rm CARS}=
\frac1L\sum_{b=1}^L\left[
\ip{V_{I_{b,1}}}{V_{I_{b,2}}}_F
-\frac14\norm{V_{I_{b,1}}-V_{I_{b,2}}}_F^2\right],
\label{eq:cars-score}
\end{equation}
and selects Dual if $\widehat\Gamma_{\rm CARS}>0$, Shared otherwise.
The cross-half inner product retains structure that replicates across
splits; the disagreement term prices its sampling variation. Under the
independent-half model below, the split difference has expected squared norm
$4\tr(\Sigma)/n$, so the factor $1/4$ charges the full-sample variance
cost $\tr(\Sigma)/n$.

This score has a direct connection to the local boundary.
Suppose a split residual satisfies $V_{I_{b,j}}=\Delta+\varepsilon_{b,j}$,
with independent centered half-sample errors of covariance $2\Sigma/n$.
Then
\begin{equation*}
\E\widehat\Gamma_{\rm CARS}
=\norm{\Delta}_F^2-\frac{\tr(\Sigma)}{n}.
\end{equation*}
If the residual lies in $\mathcal A$, $\Delta=n^{-1/2}\Pi_{\mathcal A}H$
to first order, and $\Sigma=\sigma^2I_k$, the right-hand side is
$(\delta^2-\sigma^2k)/n$. Thus split agreement and disagreement estimate
the same signal--variance comparison as Corollary~\ref{cor:phase}, using
training data rather than a specified noise variance. The expectation
identity is proved in Appendix~\ref{app:cars-score}; the held-out
operator-loss study in RQ4 evaluates its geometry choices on real
embeddings.

\section{Experimental Setting}\label{sec:experimental-setting}

The theory yields a sequence of empirical questions: whether its
signal--uncertainty crossing is visible, whether the signal can be estimated,
whether directional mismatch changes retrieval rankings, and whether these
estimates improve geometry choice on real embeddings. In real-data Shared--Dual comparisons, both
families use paired training subsets and the same frozen embeddings,
relevance labels, and test queries. The four base encoders are E5-base-v2,
BGE-base, GTE-base, and
Contriever-MSMARCO
\citep{wang2022e5,xiao2023cpack,li2023gte,izacard2022contriever};
the real-embedding studies use the subsets specified below. Retrieval
quality is measured by mean test-query NDCG@10, defined in
Appendix~\ref{app:experiments}.

\paragraph{RQ1: Does the predicted boundary appear?}
We test whether the predicted signal--sample-size crossover appears in
controlled two-view retrieval. A Gaussian retrieval experiment rotates the
document loading in dimension $p=32$ and
fits ranks $4/8/16$ using $32$--$2048$ training pairs. Each positive is
ranked against 299 independent negatives, with five paired seeds at every
mismatch--sample-size setting.

\paragraph{RQ2: Can directional signal be estimated?}
We test whether correcting estimation noise makes observed asymmetry more
informative about the held-out benefit of Dual. In the rank-8 simulation,
known population PSD distance provides a target for comparing raw plug-in
asymmetry with its uncertainty-corrected estimate. On real embeddings, we
compare raw and corrected training scores across 840 operator fits on ten
tasks. Disjoint held-out relevance moments define the Shared--Dual
operator-loss advantage used to assess those scores.

\paragraph{RQ3: How do mismatch, rank, and training size affect retrieval?}
We test whether directional mismatch changes the relative retrieval quality of
Shared and Dual projections. Starting from frozen real embeddings, we rotate
only query vectors in eight relevance-informed planes through seven angles
from $0^\circ$ to $90^\circ$, leaving document embeddings, evaluation corpora,
and relevance labels unchanged. We train rank-16 adapters with 1,024 queries
on FEVER, HotpotQA, NQ, MS MARCO, and CQA-TeX using BGE-base and GTE-base.
Each angle uses three paired seeds; learning rates and checkpoints are
selected on validation queries. To examine the interaction with model
capacity and data availability, we also vary rank over $\{4,8,16,32\}$ and
training size from 32 to 2,048 queries. Figure~\ref{fig:rank-sweep} shows
four complete corpus--encoder grids, with three paired seeds per cell. We rank each complete document
collection by exact inner-product search.
Figures~\ref{fig:controlled-real} and~\ref{fig:rank-sweep} report these
retrieval comparisons.

\paragraph{RQ4: Does operator risk shift with more data, and can CARS select the better geometry?}
We test whether more training queries shift held-out operator fit toward
Dual and whether CARS can choose the lower-risk geometry using training data
alone. The displayed operator-fit curves use NQ and SciFact with all four
encoders and three seeds across nested training sizes and ranks. A disjoint
query set defines the relevance-moment target
$M_{\rm test}$; for geometry $g$, normalized loss is
$L_g=\|\widehat M_g-M_{\rm test}\|_F^2/\|M_{\rm test}\|_F^2$, so positive
$L_s-L_d$ favors Dual. Figure~\ref{fig:gpu-operator} displays the seven
measured training sizes for NQ (ranks $4/8/16/32$, $n=256$--$2071$) and
SciFact (ranks $64/128/256$, $n=256$--$647$).

For geometry selection, we evaluate CARS (Section~\ref{sec:cars}) on five disjoint held-out query
folds of ArguAna, MS MARCO, FEVER, NQ, and SciFact, using all four encoders,
nested training sizes, ranks $4/8/16/32$, and 20 repeated internal
half-splits. For each sample-size--rank cell, selection regret is
$L_{\widehat g}-\min(L_s,L_d)$; accuracy is the fraction of cells choosing
the lower-loss geometry, assigning ties to Shared. Regret is averaged over
cells within each held-out fold and then over the five folds. A strict fold
win requires lower fold-mean regret than both fixed rules; ties win nothing.

\section{Results}\label{sec:results}

\subsection{RQ1: Does the predicted boundary appear?}

\begin{figure}[htbp]
\centering
\includegraphics[width=\linewidth]{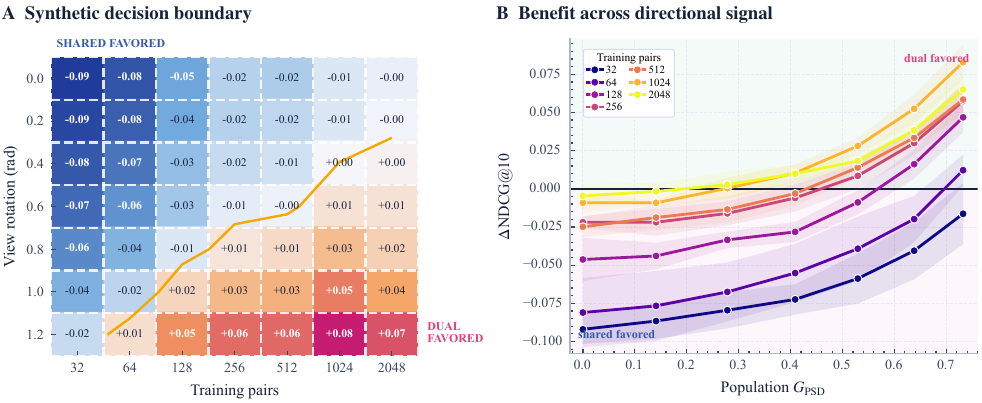}
\caption{A Shared-to-Dual crossover in controlled two-view retrieval.
\textbf{A:} Mean Dual-minus-Shared NDCG@10 across mismatch and training size; the gold contour marks zero.
\textbf{B:} The same gap versus population distance $G_{\rm PSD}$ to the shared PSD family; bands show 95\% normal confidence intervals.}
\label{fig:phase}
\end{figure}

In Figure~\ref{fig:phase}A, the first positive rank-8 grid mean occurs at
$n=256$ for $0.8$-radian rotation but at $n=64$ for $1.2$-radian rotation.
Figure~\ref{fig:phase}B likewise shows a growing Dual advantage as population
distance from the PSD family increases. Together, the panels show the
qualitative crossing predicted by Corollary~\ref{cor:phase}:
stronger directional mismatch needs fewer training pairs to overcome
Dual's estimation cost. The zero contour shows that neither geometry
dominates throughout the grid. A separate local-Gaussian calibration
matches the risk limits in Theorem~\ref{thm:localrisk} within $0.25\%$
and the SURE selection probabilities in Theorem~\ref{thm:sure} within
$0.001$ (Appendix Table~\ref{tab:local-calibration}).

\subsection{RQ2: Can directional signal be estimated?}

The plug-in bias preceding Theorem~\ref{thm:sure} predicts that a
large fitted asymmetry can reflect sampling noise. Figure~\ref{fig:noise-correction}A
shows this inflation directly. On real embeddings, the raw plug-in trend
is nearly flat across score deciles, whereas the corrected-score trend in
Figure~\ref{fig:noise-correction}B rises from Shared-favored to
Dual-favored held-out fit. Accounting for variation between training splits, as in Equation~\eqref{eq:cars-score}, makes the corrected score more informative about which family has lower operator loss on held-out
queries.

\begin{figure}[htbp]
\centering
\includegraphics[width=\linewidth]{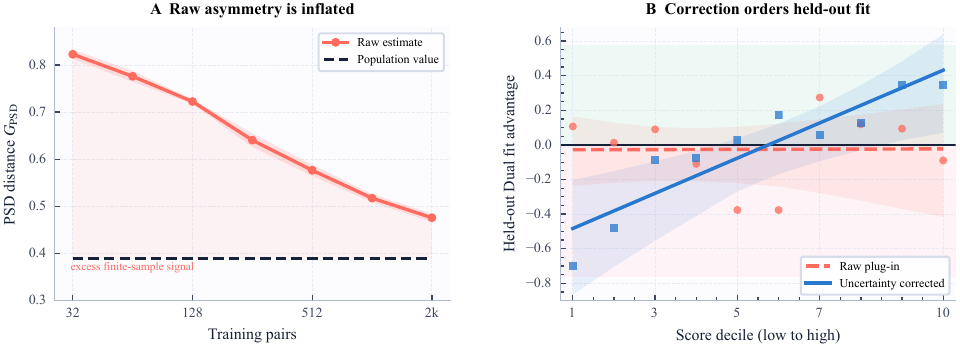}
\caption{Raw asymmetry is inflated, while correction better predicts held-out fit.
\textbf{A:} Training PSD distance exceeds its population value; shading shows a 95\% $t$ interval.
\textbf{B:} Held-out Shared-minus-Dual operator-loss advantage across raw and corrected score deciles. Positive values favor Dual.}
\label{fig:noise-correction}
\end{figure}

\subsection{RQ3: How do mismatch, rank, and training size affect retrieval?}\label{sec:rq3-results}
The rank-one rotation example in Appendix~\ref{app:deflection} predicts a
widening optimal-separation gap as cross-view mismatch increases. The
following experiment asks whether an analogous trend appears in trained,
full-corpus retrieval.
Figure~\ref{fig:controlled-real} reports the test NDCG@10 gap at each query
rotation. Averaged across the five datasets, the Dual--Shared gap rises
from about $.0032$ at $0^\circ$ to $.0146$ at $90^\circ$, a $4.6$-fold increase.
All five dataset means increase between these endpoints; on MS MARCO, the
gap changes sign from about $-.0008$ to $+.0219$.
The widening gap is consistent with the predicted benefit of Dual
projections under stronger cross-view mismatch.

\begin{figure}[htbp]
\centering
\includegraphics[width=\linewidth]{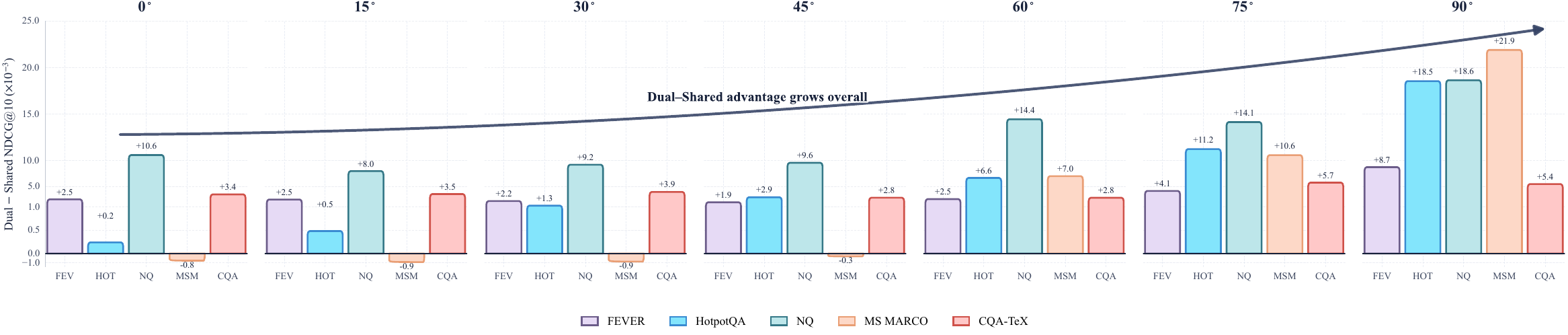}
\caption{Dual-minus-Shared full-corpus NDCG@10 under query rotation.
Each bar averages three paired seeds for BGE-base and GTE-base on each dataset.}
\label{fig:controlled-real}
\end{figure}

Figure~\ref{fig:rank-sweep} shows a clear effect of training size.
Shared wins 13 of the 16 displayed cells at $n=32$, whereas Dual wins
all 32 cells at $n=1024$ and $2048$. Within each grid, rank and rotation
are fixed along the sample-size axis, so the initially Shared-favored
conditions switch to Dual as training data increase. Averaged over the four displayed grids and seven training sizes, rank 32 yields slightly higher absolute NDCG@10 than rank 4 for
both Shared (from 0.7242 to 0.7257) and Dual (from 0.7282 to 0.7289),
without enlarging Dual's relative advantage.

\begin{figure}[htbp]
\centering
\includegraphics[width=\linewidth]{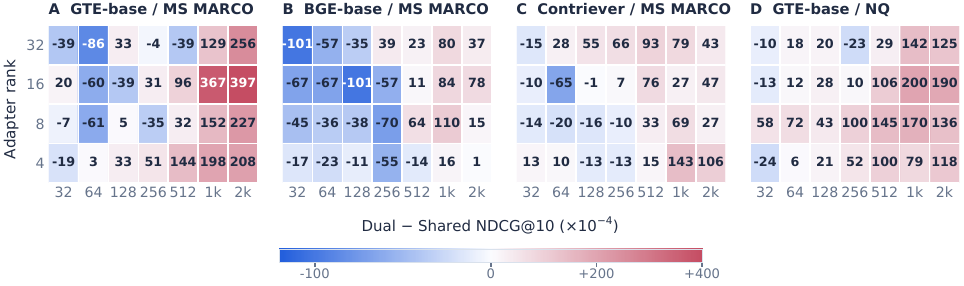}
\caption{Rank--sample-size retrieval gaps. Each cell is mean Dual-minus-Shared test NDCG@10 over three paired seeds, in units of $10^{-4}$; blue favors Shared and red favors Dual.}
\label{fig:rank-sweep}
\end{figure}

\subsection{RQ4: Does operator risk shift with more data, and can CARS select the better geometry?}

The local boundary predicts that, at fixed rank and signal,
the variance penalty for Dual diminishes as the number of training queries
increases.
Across FEVER, NQ, ArguAna, and SciFact, all 168 comparable sample-size slopes
are positive (Appendix Figures~\ref{fig:operator-all-curves}
and~\ref{fig:operator-all-curves-second}). Figure~\ref{fig:gpu-operator}
shows the encoder means for NQ and SciFact. On NQ, three of the four curves
cross from Shared-favored to Dual-favored held-out fit. Together, these
results indicate that more data make directional structure easier to estimate.

\begin{figure}[h]
\centering
\includegraphics[width=\linewidth]{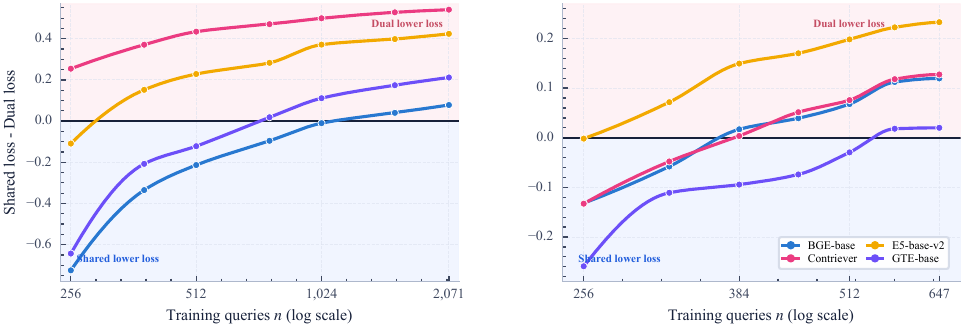}
\caption{Held-out operator fit moves toward Dual with more training queries.
\textbf{A:} NQ; \textbf{B:} SciFact. Each point averages ranks and three seeds
within one encoder; positive values indicate lower held-out loss for Dual.}
\label{fig:gpu-operator}
\end{figure}

\paragraph{Geometry selection.}
We now test the training-only CARS rule in
Equation~\eqref{eq:cars-score} against the fixed Shared and Dual choices.
Table~\ref{tab:ucgs-regret} shows consistent improvements across all four
encoders: CARS wins 85/100 encoder--fold comparisons, reaches 90.1\% mean
selection accuracy, and reduces mean regret by 49--96\% relative to the
better fixed choice. This gain answers the geometry-choice part of RQ4:
when the preferred family varies across training sizes, ranks, and datasets,
cross-fitted signal minus disagreement is more reliable than committing to one geometry throughout. 

\begin{table}[htbp]
\centering
\caption{Geometry selection on five datasets and
four encoders. Lower held-out operator-risk regret and higher decision
accuracy or strict outer-fold wins indicate better performance. Regret and
accuracy are averaged over five held-out query folds.}
\label{tab:ucgs-regret}
\scriptsize
\setlength{\tabcolsep}{3pt}
\renewcommand{\arraystretch}{1.2}
\begin{tabular*}{\linewidth}{@{\extracolsep{\fill}}ll*{4}{rrc}}
\toprule
\multirow[c]{2}{*}{\textbf{Rule}} &
\multirow[c]{2}{*}{\textbf{Dataset}} &
\multicolumn{3}{>{\columncolor{blue!8}}c}{\textbf{BGE-base}} &
\multicolumn{3}{>{\columncolor{pink!12}}c}{\textbf{Contriever}} &
\multicolumn{3}{>{\columncolor{orange!9}}c}{\textbf{E5-base-v2}} &
\multicolumn{3}{>{\columncolor{purple!6}}c}{\textbf{GTE-base}} \\
& &
\cellcolor{blue!8}\textbf{Regret} &
\cellcolor{blue!8}\textbf{Accuracy} &
\cellcolor{blue!8}\textbf{Wins} &
\cellcolor{pink!12}\textbf{Regret} &
\cellcolor{pink!12}\textbf{Accuracy} &
\cellcolor{pink!12}\textbf{Wins} &
\cellcolor{orange!9}\textbf{Regret} &
\cellcolor{orange!9}\textbf{Accuracy} &
\cellcolor{orange!9}\textbf{Wins} &
\cellcolor{purple!6}\textbf{Regret} &
\cellcolor{purple!6}\textbf{Accuracy} &
\cellcolor{purple!6}\textbf{Wins} \\
\midrule
\multirow[c]{5}{*}{\shortstack[l]{\textbf{Always}\\\textbf{Shared}}} & ArguAna & 0.0950 & 45.8\% & 0/5 & 0.1310 & 40.0\% & 0/5 & 0.2543 & 30.0\% & 0/5 & 0.1294 & 43.3\% & 0/5 \\
 & MS MARCO & 0.0680 & 75.6\% & 1/5 & 0.3254 & 46.2\% & 0/5 & 0.0839 & 75.0\% & 1/5 & 0.0442 & 79.4\% & 1/5 \\
 & FEVER & 0.2490 & 33.1\% & 1/5 & 0.4959 & 21.9\% & 0/5 & 0.3968 & 25.0\% & 0/5 & 0.2999 & 37.5\% & 1/5 \\
 & NQ & 0.0050 & \cellcolor{green!13}85.0\% & 1/5 & 0.1915 & 40.0\% & 0/5 & 0.1020 & 54.3\% & 1/5 & 0.0232 & 74.3\% & 2/5 \\
 & SciFact & 0.0226 & 63.3\% & 1/5 & 0.0288 & 58.3\% & 0/5 & 0.0202 & 65.0\% & 2/5 & 0.0374 & 65.0\% & 1/5 \\
\midrule[0.4pt]
\multirow[c]{5}{*}{\shortstack[l]{\textbf{Always}\\\textbf{Dual}}} & ArguAna & 0.3342 & 54.2\% & 0/5 & 0.3395 & 60.0\% & 0/5 & 0.1370 & 70.0\% & 0/5 & 0.3310 & 56.7\% & 0/5 \\
 & MS MARCO & 9.5146 & 24.4\% & 0/5 & 2.5724 & 53.8\% & 0/5 & 7.9767 & 25.0\% & 0/5 & 9.6687 & 20.6\% & 0/5 \\
 & FEVER & 0.4271 & 66.9\% & 0/5 & 0.1882 & 78.1\% & 0/5 & 0.2843 & 75.0\% & 0/5 & 0.6001 & 62.5\% & 0/5 \\
 & NQ & 1.5685 & 15.0\% & 0/5 & 0.6073 & 60.0\% & 0/5 & 0.9152 & 45.7\% & 0/5 & 1.5317 & 25.7\% & 0/5 \\
 & SciFact & 0.4445 & 36.7\% & 0/5 & 0.4369 & 41.7\% & 0/5 & 0.5591 & 35.0\% & 0/5 & 0.4899 & 35.0\% & 0/5 \\
\midrule[0.4pt]
\multirow[c]{5}{*}{\textbf{CARS}} & ArguAna & \cellcolor{green!30}\textbf{0.0018} & \cellcolor{green!30}\textbf{94.2\%} & \cellcolor{green!30}\textbf{5/5} & \cellcolor{green!30}\textbf{0.0059} & \cellcolor{green!30}\textbf{94.2\%} & \cellcolor{green!30}\textbf{5/5} & \cellcolor{green!30}\textbf{0.0131} & \cellcolor{green!13}\textbf{88.3\%} & \cellcolor{green!30}\textbf{5/5} & \cellcolor{green!30}\textbf{0.0015} & \cellcolor{green!30}\textbf{95.0\%} & \cellcolor{green!30}\textbf{5/5} \\
 & MS MARCO & \cellcolor{green!13}\textbf{0.0207} & \cellcolor{green!30}\textbf{93.8\%} & \cellcolor{green!13}\textbf{4/5} & \cellcolor{green!30}\textbf{0.0036} & \cellcolor{green!30}\textbf{96.9\%} & \cellcolor{green!30}\textbf{5/5} & \cellcolor{green!13}\textbf{0.0251} & \cellcolor{green!30}\textbf{93.1\%} & \cellcolor{green!13}\textbf{4/5} & \cellcolor{green!13}\textbf{0.0111} & \cellcolor{green!30}\textbf{91.9\%} & \cellcolor{green!13}\textbf{4/5} \\
 & FEVER & \cellcolor{green!13}\textbf{0.1475} & \cellcolor{green!13}\textbf{87.5\%} & \cellcolor{green!13}\textbf{4/5} & \cellcolor{green!13}\textbf{0.0749} & \cellcolor{green!13}\textbf{88.1\%} & \cellcolor{green!13}\textbf{4/5} & \cellcolor{green!13}\textbf{0.1043} & \cellcolor{green!13}\textbf{86.9\%} & \cellcolor{green!13}\textbf{4/5} & \textbf{0.2671} & \textbf{77.5\%} & \cellcolor{green!13}\textbf{4/5} \\
 & NQ & \textbf{0.0040} & \cellcolor{green!30}\textbf{90.7\%} & \cellcolor{green!13}\textbf{4/5} & \cellcolor{green!30}\textbf{0.0013} & \cellcolor{green!30}\textbf{97.9\%} & \cellcolor{green!30}\textbf{5/5} & \cellcolor{green!13}\textbf{0.0568} & \cellcolor{green!13}\textbf{81.4\%} & \cellcolor{green!13}\textbf{4/5} & \cellcolor{green!13}\textbf{0.0090} & \cellcolor{green!30}\textbf{92.9\%} & \textbf{3/5} \\
 & SciFact & \cellcolor{green!30}\textbf{0.0032} & \cellcolor{green!30}\textbf{92.5\%} & \cellcolor{green!13}\textbf{4/5} & \cellcolor{green!30}\textbf{0.0042} & \cellcolor{green!30}\textbf{93.3\%} & \cellcolor{green!30}\textbf{5/5} & \textbf{0.0193} & \cellcolor{green!13}\textbf{86.7\%} & \textbf{3/5} & \textbf{0.0287} & \cellcolor{green!13}\textbf{80.0\%} & \cellcolor{green!13}\textbf{4/5} \\
\bottomrule
\end{tabular*}
\vspace{2pt}
\end{table}

\section{Discussion and future directions}\label{sec:limitations}

The value of separate projections depends on how query and document
representations align after accounting for their marginal covariances.
The rotation experiment shows a larger Dual advantage as mismatch
increases, while the rank--sample-size grids show that more training
data can reverse a low-sample preference for Shared. Geometry choice
should therefore account for both directional mismatch and available
training data.

Our exact risk boundary assumes local, isotropic Gaussian operator noise. Extending risk estimation to anisotropic noise and directly to ranking metrics would make geometry selection more closely reflect retrieval performance. Another direction is partial sharing, with the number of separately parameterized directions selected alongside rank.

\section{Conclusion}

We characterized the approximation and estimation costs of shared and dual projections for dense retrieval. In the local Gaussian model, dual projections have lower risk when squared directional signal exceeds \(\sigma^2r(2p-r-1)/2\). Retrieval and held-out operator experiments show how mismatch, rank, and sample size affect this tradeoff. CARS uses cross-fitted estimates to select between the two projection families, reducing held-out regret by 49--96\% relative to the better fixed choice across five datasets. Finally, this framework may guide data-efficient retrieval-head design and other two-view representation problems in which the inputs play different roles.

\clearpage
\section*{Reproducibility Statement}
Detailed derivations and proofs are provided in the appendix, together with experimental protocols. Randomized experiments use predefined seeds for data splits, training, and cross-fitting; results are averaged across multiple seeds where applicable.

\section*{AI Use Statement}
Generative AI tools were used for language polishing, checking proofs, brainstorming, and reviewing/debugging portions of the experimental code. All AI-generated or AI-modified content has been reviewed and verified by the authors. The authors take full responsibility for all final proofs, code, analyses, and claims.

\bibliography{references}
\bibliographystyle{iclr2027_conference}

\clearpage
\appendix
\section*{Appendix}
The appendix provides proofs, protocols for the reported experiments, and
the held-out operator-risk trajectories cited in the main text.
\section{Complete Proof of the Approximation Theorem}
\label{app:projection}

We first record the strict expressivity relation.  For every
$P\in\mathbb R^{r\times p}$, $P^\top P$ is symmetric,
$v^\top P^\top Pv=\norm{Pv}_2^2\ge0$, and
$\rank(P^\top P)\le r$.  Conversely, if
$M=U\Sigma V^\top$ has rank $q\le r$, define
\[
A=\begin{bmatrix}\Sigma^{1/2}U^\top\\0_{(r-q)\times p}\end{bmatrix},\qquad
B=\begin{bmatrix}\Sigma^{1/2}V^\top\\0_{(r-q)\times p}\end{bmatrix}.
\]
Then $A^\top B=M$. For $1\le r\le p$, the rank-one matrix
$-e_1e_1^\top$ belongs to $\Md(r)$ but not $\Ms(r)$, so
$\Ms(r)\subsetneq\Md(r)$.

\begin{proof}[Proof of Theorem~\ref{thm:projection}]
For any symmetric $S$, the Frobenius inner product between the skew-symmetric
$K$ and symmetric $H-S$ vanishes.  Hence
\[
\norm{M_\star-S}_F^2=\norm K_F^2+\norm{H-S}_F^2.
\]
The first term is unavoidable.  Diagonalize
$H=U\operatorname{diag}(\eta)U^\top$.  By orthogonal invariance and the
Hoffman--Wielandt inequality, an optimal symmetric $S$ can be chosen with the
same eigenvectors.  PSD constrains its eigenvalues $s_j$ to be nonnegative, and
the rank constraint permits at most $r$ nonzero values.  A negative $\eta_j$ is
therefore optimally paired with zero.  For a positive $\eta_j$, retaining it at
$s_j=\eta_j$ costs zero and dropping it costs $\eta_j^2$.  The best support
keeps the largest at most $r$ positive eigenvalues.  Adding the orthogonal skew
residual yields Equation~\eqref{eq:projection}.

For comparison with the Dual class, let $M_{\star,r}$ be the rank-$r$ SVD
truncation of $M_\star$ and set
$a_r^2=\sum_{j\le r}\sigma_j(M_\star)^2$. For any $D$ of rank at most $r$,
von Neumann's inequality followed by Cauchy--Schwarz gives
$\langle M_\star,D\rangle_F\le a_r\|D\|_F$. Consequently,
\[
\|M_\star-D\|_F^2
\ge\|M_\star\|_F^2+\|D\|_F^2-2a_r\|D\|_F
\ge\|M_\star\|_F^2-a_r^2
=\sum_{j>r}\sigma_j(M_\star)^2.
\]
Both inequalities are equalities at $D=M_{\star,r}$. Since
$\Ms(r)\subset\Md(r)$, subtracting this Dual minimum from
Equation~\eqref{eq:projection} gives a nonnegative and exact approximation
gap between the two families.
\end{proof}

\section{Geometry and Proof of the Local Risk Theorem}
\label{app:local}

\subsection{Tangent spaces and dimensions}
For $j\in\{s,d\}$, write $d_j=\dim T_j$.
Let $S=U\Lambda U^\top$ with $U\in\mathbb R^{p\times r}$ orthonormal and
$\Lambda\succ0$.  Extend $U$ to an orthogonal matrix $[U,U_\perp]$.

For the general fixed-rank manifold, first-order perturbation of an SVD gives
\begin{equation}
T_d=\left\{
UAU^\top+U_\perp BU^\top+UCU_\perp^\top:
\begin{array}{l}
A\in\mathbb R^{r\times r},\\
B\in\mathbb R^{(p-r)\times r},\\
C\in\mathbb R^{r\times(p-r)}
\end{array}\right\}.
\label{eq:tdetail}
\end{equation}
The three blocks are orthogonal and free, so
$\dim T_d=r^2+2r(p-r)=2pr-r^2$.

For the PSD manifold, use the local representation
$S(t)=U(t)\Lambda(t)U(t)^\top$.  Differentiating at zero gives
\begin{equation}
T_s=\left\{
UAU^\top+U_\perp BU^\top+UB^\top U_\perp^\top:
A=A^\top,\ B\in\mathbb R^{(p-r)\times r}
\right\}.
\label{eq:tsdetail}
\end{equation}
Therefore
\[
\dim T_s=\frac{r(r+1)}2+r(p-r)
=pr-\frac{r(r-1)}2.
\]
Every matrix in Equation~\eqref{eq:tsdetail} is a special case of
Equation~\eqref{eq:tdetail}, proving $T_s\subset T_d$.  Subtraction gives
$k=r(2p-r-1)/2$.

The difference space also has an explicit orthogonal block description. In
the basis $[U,U_\perp]$, write $X\in T_d$ as
$\left[\begin{smallmatrix}A&C\\B&0\end{smallmatrix}\right]$.
Its $T_s$ component has upper-left block $\sym(A)$ and paired cross-blocks
$B_s=(B+C^\top)/2$ and $B_s^\top$. The remaining component in
$\mathcal A=T_d\cap T_s^\perp$ is
\[
\begin{bmatrix}
\operatorname{skew}(A)&-D^\top\\D&0
\end{bmatrix},\qquad D=\frac{B-C^\top}{2}.
\]
The skew block contributes $r(r-1)/2$ coordinates and the independent
cross-block $D$ contributes $r(p-r)$, proving the stated decomposition of
$k$. Orthogonality also gives
$\|\Pi_{\mathcal A}X\|_F^2=\|\operatorname{skew}(A)\|_F^2+
\tfrac12\|B-C^\top\|_F^2$.

The corresponding orthogonal projectors can be written explicitly.  With
$P=UU^\top$ and $Z_{\rm sym}=(Z+Z^\top)/2$,
\begin{align*}
\Pi_{T_d}(Z)&=PZ+ZP-PZP,\\
\Pi_{T_s}(Z)&=PZ_{\rm sym}+Z_{\rm sym}P-PZ_{\rm sym}P.
\end{align*}

\subsection{Derivative of nearest-point projection}

We give a self-contained first-order argument.  Let $\mathcal N$ be either
regular manifold and choose a smooth local chart
$\phi(u)=S+Lu+Q(u)$, where $L$ is injective,
$\operatorname{range}(L)=T_S\mathcal N$, and $\norm{Q(u)}=O(\norm u^2)$.
For small ambient perturbation $e$, a locally nearest point
$\phi(\widehat u)$ minimizes $\norm{S+e-\phi(u)}^2$.  Its normal equation is
\[
L^\top(e-L\widehat u)=O(\norm e^2),
\]
because $\widehat u=O(\norm e)$ and both the chart remainder and derivative
remainder are second order.  Solving gives
\[
L\widehat u=L(L^\top L)^{-1}L^\top e+O(\norm e^2)
=\Pi_{T_S\mathcal N}e+O(\norm e^2).
\]
Consequently the nearest-point map satisfies
\begin{equation}
\pi_{\mathcal N}(S+e)=S+\Pi_{T_S\mathcal N}e+O(\norm e^2).
\label{eq:nearestderivative}
\end{equation}
Regularity and $\Lambda\succ0$ ensure a sufficiently small neighborhood with a
locally unique projection.
The global nearest points used in the theorem exist because $\Md(r)$ and
$\Ms(r)$ are closed. Since $S$ belongs to both, any nearest point
$\widehat M_j$ satisfies $\|\widehat M_j-S\|_F\le2\|e\|_F$. For sufficiently
small $e$, the smallest positive singular value of $S$ therefore keeps a
nearest point at rank $r$. It then agrees with the smooth local projection
in Equation~\eqref{eq:nearestderivative}.

\begin{proof}[Proof of Theorem~\ref{thm:localrisk}]
Set $e_n=n^{-1/2}(H+\sigma G)+O(n^{-1})$ in
Equation~\eqref{eq:nearestderivative}.  For $j\in\{s,d\}$,
\[
\widehat M_j
=S+\frac1{\sqrt n}\Pi_{T_j}(H+\sigma G)+O_p(n^{-1}).
\]
Subtracting $M_n=S+n^{-1/2}H+O(n^{-1})$ yields
\[
\sqrt n(\widehat M_j-M_n)
=-\Pi_{T_j^\perp}H+\sigma\Pi_{T_j}G+o_p(1).
\]
The deterministic bias is normal to the projected Gaussian noise.  In any
Frobenius-orthonormal basis of $T_j$, the latter has $d_j$ independent standard
normal coordinates, hence expected squared norm $d_j$. To pass from
convergence in probability to risk convergence, use the global nearest-point
bound $\|\widehat M_j-S\|_F\le2\|Y_n-S\|_F$. The deterministic remainder in
$M_n=S+n^{-1/2}H+O(n^{-1})$ is uniformly bounded after multiplication by
$\sqrt n$, so, for all sufficiently large $n$,
\[
n\|\widehat M_j-M_n\|_F^2
\le C\bigl(1+\|H\|_F^2+\sigma^2\|G\|_F^2\bigr)
\]
for a constant $C$ independent of $n$. The Gaussian right-hand side is
integrable; dominated convergence (after the local expansion, which holds
almost surely for each fixed $G$) therefore gives
\[
n\E\norm{\widehat M_j-M_n}_F^2
\longrightarrow
\norm{\Pi_{T_j^\perp}H}_F^2+\sigma^2d_j.
\]
Since $H\in T_d$, the dual bias vanishes.  Since $T_s\subset T_d$, the shared
bias is its component in $T_d\cap T_s^\perp$.
\end{proof}

\begin{proof}[Proof of Corollary~\ref{cor:phase}]
By the orthogonal decomposition $T_d=T_s\oplus\mathcal A$, the shared
asymptotic risk exceeds the dual asymptotic risk by
\[
\|\Pi_{T_s^\perp}H\|_F^2+\sigma^2d_s-\sigma^2d_d
=\|\Pi_{\mathcal A}H\|_F^2-\sigma^2k
=\delta^2-\sigma^2k.
\]
The difference is positive exactly under Equation~\eqref{eq:oracleboundary};
equality gives a first-order tie. For
$\Delta_n=n^{-1/2}H+O(n^{-1})$, linearity and boundedness of the orthogonal
projector give
$n\|\Pi_{\mathcal A}\Delta_n\|_F^2=\delta^2+O(n^{-1/2})$.
Hence the unscaled expression is the same first-order boundary along these
local alternatives, away from equality.
\end{proof}

\paragraph{Why the theorem is local.}
At rank-changing points the two sets are stratified rather than smooth, so a
tangent cone replaces the tangent space.  For a fixed target far outside the
PSD manifold, the derivative of its nearest-point projection also contains a
curvature (shape-operator) term.  The local alternative avoids both issues and
is the regime in which effective dimension has an exact first-order meaning.

\section{Proof of the SURE Selector}
\label{app:sure}

\begin{proof}[Proof of Theorem~\ref{thm:sure}]
Work in a Frobenius-orthonormal basis of $T_d$, so the observation is an
ordinary $d_d$-dimensional Gaussian sequence.  For the linear estimator
$\widehat H_j=\Pi_{T_j}Z$, Stein's formula \citep{stein1981sure} is
\[
\widehat R_j=\norm{(I-\Pi_{T_j})Z}^2+2\sigma^2d_j-\sigma^2d_d.
\]
Its unbiasedness is also immediate here without a general Stein identity:
$\E\|(I-\Pi_{T_j})Z\|_F^2=
\|(I-\Pi_{T_j})H\|_F^2+\sigma^2(d_d-d_j)$, so
$\E\widehat R_j=\|(I-\Pi_{T_j})H\|_F^2+\sigma^2d_j
=\E\|\widehat H_j-H\|_F^2$.
For $j=d$, this reduces to $\sigma^2d_d$.  Because
$T_d=T_s\oplus\mathcal A$ orthogonally, for $j=s$ it is
\[
\widehat R_s=\norm{\Pi_{\mathcal A}Z}^2+2\sigma^2d_s-\sigma^2d_d.
\]
Thus $\widehat R_d<\widehat R_s$ exactly when
$\norm{\Pi_{\mathcal A}Z}^2>2\sigma^2(d_d-d_s)$, proving
Equation~\eqref{eq:surethreshold}.

In an orthonormal basis of $\mathcal A$,
$\Pi_{\mathcal A}Z/\sigma$ is a $k$-variate identity-covariance normal with
mean squared norm $\lambda=\delta^2/\sigma^2$.  Its squared norm therefore has
the noncentral $\chi_k^2(\lambda)$ distribution, proving
Equation~\eqref{eq:power}.

The two unconditional fixed-model risks in the scaled experiment are
\[
R_s(H)=\delta^2+\sigma^2d_s,
\qquad R_d(H)=\sigma^2d_d.
\]
Their difference is $\delta^2-\sigma^2k$.  Under the model-choice loss defined
in Theorem~\ref{thm:sure}, if this difference is negative, regret occurs only
when SURE selects dual; if positive, regret occurs only when SURE selects
shared.  Multiplication by the corresponding exact tail probability proves
Equation~\eqref{eq:regret}. Replacing $\sigma^2$ by $\sigma^2/n$ gives the
unscaled threshold.
\end{proof}

\subsection{Expectation of the cross-fitted CARS score}
\label{app:cars-score}

The real-embedding selector repeatedly splits data into disjoint halves rather than
the isotropic Gaussian observation in Theorem~\ref{thm:sure}. Its score has a
simple expectation under the stated independent-error model. Write
$V_j=\Delta+\varepsilon_j$, where the two errors are independent and
mean-zero with covariance $2\Sigma/n$ in an orthonormal operator-coordinate
basis. Independence and centering give
\[
\E\langle V_1,V_2\rangle_F=\|\Delta\|_F^2,
\qquad
\E\|V_1-V_2\|_F^2
=\E\|\varepsilon_1\|_F^2+\E\|\varepsilon_2\|_F^2
=\frac{4\operatorname{tr}(\Sigma)}{n}.
\]
Thus
\[
\E\left[\langle V_1,V_2\rangle_F
-\tfrac14\|V_1-V_2\|_F^2\right]
=\|\Delta\|_F^2-\frac{\operatorname{tr}(\Sigma)}{n}.
\]
Averaging over repeated half-splits leaves this expectation unchanged;
independence across the repeated splits is unnecessary for this identity.
If the residuals span the $k$ dual-only directions and
$\Sigma=\sigma^2I_k$, then $\operatorname{tr}(\Sigma)=\sigma^2k$, matching
the unscaled first-order boundary in Corollary~\ref{cor:phase}. The identity
is an expectation calculation; unlike the Gaussian SURE theorem, it does
not assert an exact finite-sample selection probability.

\section{Optimal Rank-$r$ Separation in Two-View Retrieval}
\label{app:deflection}

Let $(X,Y^+)$ be a centered query--positive-document pair with
positive-definite covariances $\Sigma_x,\Sigma_y$. Let $Y^-$ be a centered
document with covariance $\Sigma_y$, independent of $X$. Define the
population relevance moment and its whitened form by
\[
C=\E[X(Y^+-Y^-)^\top]=\E[XY^{+\top}],
\qquad T=\Sigma_x^{-1/2}C\Sigma_y^{-1/2}.
\]
For $j\in\{s,d\}$, the optimal positive--negative separation under a
unit negative-score second moment is
\[
\mathfrak D_j^\star(r)=
\sup_{\substack{M\in\mathcal M_j(r)\\
\E[(X^\top MY^-)^2]\le1}}
\E[X^\top M(Y^+-Y^-)].
\]
Independence gives
$\E[(X^\top MY^-)^2]=\|\Sigma_x^{1/2}M\Sigma_y^{1/2}\|_F^2$.

\begin{theorem}[Optimal rank-$r$ separation]\label{thm:deflection}
Let $\sigma_j(T)$ be the $j$th largest singular value of $T$. Then
\begin{equation}
[\mathfrak D_d^\star(r)]^2=\sum_{j=1}^r\sigma_j(T)^2.
\label{eq:defldual}
\end{equation}
An optimum is proportional to $\Sigma_x^{-1/2}T_r\Sigma_y^{-1/2}$,
where $T_r$ is the rank-$r$ SVD truncation of $T$.
If both views are whitened, so $\Sigma_x=\Sigma_y=I$ and $T=C$, let
$\eta_1\ge\cdots\ge\eta_p$ be the eigenvalues of
$\sym(T)=(T+T^\top)/2$ and $\eta_j^+=\max(\eta_j,0)$. Then
\begin{equation}
[\mathfrak D_s^\star(r)]^2=\sum_{j=1}^r(\eta_j^+)^2.
\label{eq:deflshared}
\end{equation}
\end{theorem}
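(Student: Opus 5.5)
The plan is to reduce both optimizations to a linear functional over a Frobenius ball in whitened coordinates, and then apply Theorem~\ref{thm:projection}-style rank arguments. Since $Y^-$ is centered and independent of $X$, $\E[XY^{-\top}]=0$. Hence the objective is $\E[X^\top M Y^+]=\ip{M}{C}_F$. By the identity recorded before the statement, the constraint is $\|\Sigma_x^{1/2}M\Sigma_y^{1/2}\|_F\le1$.

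For the dual family, substitute $N=\Sigma_x^{1/2}M\Sigma_y^{1/2}$. This is a bijection on $\Md(r)$ because the covariances are positive definite and rank is preserved. The objective becomes $\ip{N}{\Sigma_x^{-1/2}C\Sigma_y^{-1/2}}_F=\ip{N}{T}_F$ and the constraint becomes $\|N\|_F\le1$. As in the Dual half of the proof of Theorem~\ref{thm:projection}, von Neumann's trace inequality followed by Cauchy--Schwarz gives $\ip{N}{T}_F\le\sum_{j\le r}\sigma_j(N)\sigma_j(T)\le\|N\|_F\,(\sum_{j\le r}\sigma_j(T)^2)^{1/2}$. Equality holds at $N=T_r/\|T_r\|_F$; if $T_r=0$ the supremum is $0$, which is consistent with the formula. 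Undoing the substitution gives $M\propto\Sigma_x^{-1/2}T_r\Sigma_y^{-1/2}$, and squaring yields \eqref{eq:defldual}.

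For the shared family under whitening, the problem is $\sup\{\ip{S}{T}_F: S\in\Ms(r),\ \|S\|_F\le1\}$. Since $S$ is symmetric, $\ip{S}{T}_F=\ip{S}{\sym(T)}_F$, because the skew part is orthogonal to symmetric matrices. Let $s_1\ge\dots\ge s_p\ge0$ be the eigenvalues of $S$, with at most $r$ of them nonzero. The symmetric von Neumann (Fan) inequality gives $\ip{S}{\sym(T)}_F\le\sum_j s_j\eta_j$. Because $s_j\ge0$ and $s_j=0$ for $j>r$, this is at most $\sum_{j\le r}s_j\eta_j^+$. Cauchy--Schwarz then bounds it by $(\sum_{j\le r}(\eta_j^+)^2)^{1/2}$. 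Equality is attained by taking $S$ with the eigenvectors of $\sym(T)$ and eigenvalues proportional to $\eta_j^+$ for $j\le r$. This $S$ is PSD with rank at most $r$, and it is the normalized $S_r^+$ of Theorem~\ref{thm:projection}. If all $\eta_j\le0$, take $S=0$, and the value is $0$. Squaring gives \eqref{eq:deflshared}.

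The main obstacle is the eigenvalue inequality in the shared step. The Fan/von Neumann inequality for symmetric matrices, $\tr(SA)\le\sum_j\lambda_j(S)\lambda_j(A)$ with both spectra sorted decreasingly, must be applied correctly. The sign restriction must also be handled carefully: nonnegative weights concentrated on the top $r$ indices let us replace $\eta_j$ by $\eta_j^+$. This is the same support-selection logic used in Theorem~\ref{thm:projection}.
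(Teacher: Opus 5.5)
Your proposal is correct and takes essentially the same route as the paper. For the dual family, both use the substitution $N=\Sigma_x^{1/2}M\Sigma_y^{1/2}$, then von Neumann and Cauchy--Schwarz, with equality at $T_r/\|T_r\|_F$. For the shared family, both drop the skew part and use the eigenvalue form of von Neumann's inequality to reduce to maximizing $\sum_j s_j\eta_j$ over nonnegative, at-most-$r$-sparse, unit-norm weights. Your explicit chain $\sum_j s_j\eta_j\le\sum_{j\le r}s_j\eta_j^+\le(\sum_{j\le r}(\eta_j^+)^2)^{1/2}$ spells out the scalar maximization that the paper states without detail.
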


For whitened views, the squared separation gap
$[\mathfrak D_d^\star(r)]^2-[\mathfrak D_s^\star(r)]^2$ equals the exact
Shared-minus-Dual approximation gap in Theorem~\ref{thm:projection} with
$M_\star=T$.

\begin{corollary}[When shared geometry is sufficient]
\label{cor:matched-views}
If the views are whitened and $T$ is symmetric PSD, then
$\mathfrak D_s^\star(r)=\mathfrak D_d^\star(r)$ for every $r$.
The same equality holds without preprocessing whitening for matched latent
views $X=C_0Z+\epsilon_q$ and $Y^+=C_0Z+\epsilon_d$, where the components
are centered and mutually independent, $\operatorname{Cov}(Z)=I$, and
$\operatorname{Cov}(\epsilon_q)=\operatorname{Cov}(\epsilon_d)$.
\end{corollary}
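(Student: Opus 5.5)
The plan is to avoid recomputing any shared optimum and instead show that, in both settings, one of the Dual optimizers from Theorem~\ref{thm:deflection} already lies in the shared family $\Ms(r)$. Since $\Ms(r)\subset\Md(r)$, the feasible set of $\mathfrak D_s^\star(r)$ is contained in that of $\mathfrak D_d^\star(r)$; both problems use the same objective and the same constraint $\E[(X^\top MY^-)^2]\le1$. This gives $\mathfrak D_s^\star(r)\le\mathfrak D_d^\star(r)$. The reverse inequality follows once we exhibit a Dual optimizer that is symmetric PSD of rank at most $r$, because it is then feasible for the shared problem and attains the Dual value.

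\textbf{Key step.} By Theorem~\ref{thm:deflection}, a Dual optimum is $c\,\Sigma_x^{-1/2}T_r\Sigma_y^{-1/2}$ for some scalar $c$, and $c>0$ can be taken whenever $T\ne0$. If $T$ is symmetric PSD, its SVD can be chosen to coincide with its spectral decomposition $T=\sum_j\sigma_j u_ju_j^\top$. The truncation $T_r=\sum_{j\le r}\sigma_ju_ju_j^\top$ is then symmetric PSD of rank at most $r$; if singular values tie at position $r$, we simply choose such an eigenvector truncation. When $\Sigma_x=\Sigma_y=\Sigma\succ0$, the congruence $\Sigma^{-1/2}T_r\Sigma^{-1/2}$ preserves symmetry, positive semidefiniteness and rank, so the optimizer belongs to $\Ms(r)$. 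This proves $\mathfrak D_s^\star(r)=\mathfrak D_d^\star(r)$. The degenerate case $T=0$ is trivial, since both values are then $0$.

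\textbf{First claim.} For whitened views, $\Sigma_x=\Sigma_y=I$, so the key step applies directly. As a consistency check, Equations~\eqref{eq:defldual} and~\eqref{eq:deflshared} agree: for PSD $T$ the eigenvalues of $\sym(T)=T$ are nonnegative and equal the singular values, so $\eta_j^+=\sigma_j(T)$.

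\textbf{Second claim.} For matched latent views, mutual independence and centering give
\[
\Sigma_x=C_0C_0^\top+\operatorname{Cov}(\epsilon_q)
=C_0C_0^\top+\operatorname{Cov}(\epsilon_d)=\Sigma_y=:\Sigma ,
\]
and $C=\E[XY^{+\top}]=C_0\E[ZZ^\top]C_0^\top=C_0C_0^\top$. Hence $T=\Sigma^{-1/2}C_0C_0^\top\Sigma^{-1/2}$ is symmetric PSD, being a congruence of a Gram matrix. Combined with $\Sigma_x=\Sigma_y$, the key step again yields a PSD Dual optimizer. Positive definiteness of $\Sigma$ is part of the standing assumptions of Appendix~\ref{app:deflection}. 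Note that Equation~\eqref{eq:deflshared} is not needed here, so the lack of whitening causes no difficulty.

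\textbf{Main obstacle.} The only delicate point is the handling of the optimizer. We need the optimizer from Theorem~\ref{thm:deflection} to be available with a nonnegative scaling constant and with a PSD choice of $T_r$ under ties. Both follow from the construction in that theorem: the objective at $M=\Sigma_x^{-1/2}T_r\Sigma_y^{-1/2}$ equals $\ip{T}{T_r}_F=\norm{T_r}_F^2\ge0$, so the positive scaling is the maximizing one.
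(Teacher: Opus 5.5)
Your proof is correct, but it takes a different route from the paper's.

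\textbf{The paper's route.} For the whitened claim, the paper compares closed-form optimal values: for symmetric PSD $T$ the singular values coincide with the $\eta_j^+$, so \eqref{eq:defldual} and \eqref{eq:deflshared} give the same spectral sum. For matched latent views, it first shows $\Sigma_x=\Sigma_y=\Sigma$ and that $T$ is PSD. It then uses the fact that $M\mapsto\Sigma^{1/2}M\Sigma^{1/2}$ is a rank-preserving bijection of the PSD cone. This identifies the entire unwhitened shared problem with the whitened one, which reduces the second claim to the first.

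\textbf{Your route.} You never evaluate the shared optimum. You use only the Dual half of Theorem~\ref{thm:deflection} and the inclusion $\Ms(r)\subset\Md(r)$. You then exhibit a Dual maximizer that is itself shared-feasible: the eigenvector truncation $T_r$, congruence-transformed by $\Sigma^{-1/2}$.

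\textbf{What your route buys.} It handles both claims with one elementary argument. It bypasses \eqref{eq:deflshared} and the eigenvector-alignment step behind it. It produces an explicit shared optimizer $\Sigma^{-1/2}T_r\Sigma^{-1/2}/\norm{T_r}_F$. It also isolates the operative sufficient condition, namely that some Dual maximizer $\Sigma_x^{-1/2}T_r\Sigma_y^{-1/2}$ is PSD.

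\textbf{What the paper's route buys.} It identifies the common value as $\bigl(\sum_{j\le r}(\eta_j^+)^2\bigr)^{1/2}$. Its class-level congruence also shows, implicitly, that \eqref{eq:deflshared} stays valid for any non-whitened views with $\Sigma_x=\Sigma_y$, even when $T$ is not PSD. Your single-point feasibility argument does not give that extension.
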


\begin{proof}[Proof of Theorem~\ref{thm:deflection}]
Because $Y^-$ is centered and independent of $X$,
\[
\E[X^\top M(Y^+-Y^-)]=\tr(M^\top C).
\]
Independence also gives
\begin{align*}
\E[(X^\top MY^-)^2]
&=\E_X\left[X^\top M\Sigma_yM^\top X\right]\\
&=\tr(M^\top\Sigma_xM\Sigma_y).
\end{align*}
Let $N=\Sigma_x^{1/2}M\Sigma_y^{1/2}$.  Then rank is preserved,
the noise constraint is $\norm N_F\le1$, and
\[
\tr(M^\top C)=\ip{N}{T}_F,
\qquad T=\Sigma_x^{-1/2}C\Sigma_y^{-1/2}.
\]
By von Neumann's trace inequality and Cauchy--Schwarz, for rank at most $r$,
\[
\ip{N}{T}_F
\le\left(\sum_{j\le r}\sigma_j(T)^2\right)^{1/2}\norm N_F.
\]
If $T_r=0$, both sides vanish and $N=0$ is optimal. Otherwise equality is
attained by $N=T_r/\norm{T_r}_F$, proving Equation~\eqref{eq:defldual}.

For whitened views and a PSD $M$, write $T=H+K$ with $H$ symmetric and $K$
skew-symmetric.  Then $\ip{M}{K}_F=0$.  Let the eigenvalues of $H$ be
$\eta_1\ge\cdots\ge\eta_p$.  Von Neumann's inequality aligns the eigenvectors
of a maximizing PSD $M$ with those of $H$.  The remaining optimization is
\[
\max_{m_j\ge0,\ \norm m_2\le1,\ \norm m_0\le r}
\sum_jm_j\eta_j,
\]
whose value is the Euclidean norm of the largest at most $r$ positive
eigenvalues. This proves Equation~\eqref{eq:deflshared}. Since the PSD
feasible set is contained in the unrestricted rank-$r$ set, the difference
of their squared optima is nonnegative.
\end{proof}

\begin{proof}[Proof of Corollary~\ref{cor:matched-views}]
When $T=T^\top\succeq0$, its singular values are exactly its nonnegative
eigenvalues, and $\sym(T)=T$. Consequently the spectral sums in
Equations~\eqref{eq:defldual} and \eqref{eq:deflshared} coincide for
every $r$. In the matched latent model, independent view noises imply
$C=\E[XY^{+\top}]=C_0C_0^\top$ and
$\Sigma_x=\Sigma_y=\Sigma=C_0C_0^\top+\Psi\succ0$.
Thus $T=\Sigma^{-1/2}C_0C_0^\top\Sigma^{-1/2}$ is symmetric, and for every
$v$, $v^\top Tv=\|C_0^\top\Sigma^{-1/2}v\|_2^2\ge0$. It is therefore PSD.
The common congruence $M\mapsto\Sigma^{1/2}M\Sigma^{1/2}$ preserves rank
and is a bijection on the PSD cone, so the whitened equality also holds
for the original shared operator class. Under unequal view covariances,
this PSD-preserving congruence no longer applies.
\end{proof}

\paragraph{Planar-rotation example.}
For $T=\rho R_\theta$ with $0<\rho\le1$ in two whitened dimensions,
$R_\theta^\top R_\theta=I_2$, so both singular values of $T$ equal $\rho$.
At rank one, Equation~\eqref{eq:defldual} gives
$\mathfrak D_d^\star(1)=\rho$. Moreover,
$\sym(R_\theta)=(R_\theta+R_\theta^\top)/2=\cos\theta\,I_2$.
For $0\le\theta\le\pi/2$, the largest positive eigenvalue of
$\sym(T)$ is $\rho\cos\theta$, and Equation~\eqref{eq:deflshared} gives
$\mathfrak D_s^\star(1)=\rho\cos\theta$. This proves the endpoint and
intermediate-angle claims used to motivate the rotation intervention.

For the latent two-view model with
$\operatorname{Cov}(\epsilon_q)=\Psi_q$ and
$\operatorname{Cov}(\epsilon_d)=\Psi_d$,
\[
\Sigma_x=C_qC_q^\top+\Psi_q,\quad
\Sigma_y=C_dC_d^\top+\Psi_d,\quad
C=C_qC_d^\top,
\]
and hence $T=\Sigma_x^{-1/2}C_qC_d^\top\Sigma_y^{-1/2}$. Its left and right
singular vectors are the query and document canonical directions.  A singular
gap controls plug-in subspace stability through Wedin-type perturbation bounds
\citep{cai2018wedin}; covariance conditioning separately affects the whitening
error \citep{gao2019stochasticcca}.

\section{A Distribution-Free Bound and Its Limitation}
\label{app:global}

\begin{proof}[Proof of Proposition~\ref{prop:global}]
Let $W=XD^\top$, assume $\norm W_F\le R$ almost surely, and constrain
$\norm M_F\le B$.  Conditional on $W_1,\ldots,W_n$, write
$Q=n^{-1}\sum_i\epsilon_iW_i$.  Von Neumann's inequality gives
\begin{align*}
\sup_{\rank(M)\le r,\norm M_F\le B}\ip M Q_F
&=B\left(\sum_{j\le r}\sigma_j(Q)^2\right)^{1/2}\\
&\le B\norm Q_F.
\end{align*}
Jensen's inequality and independence of Rademacher signs imply
\[
\E_\epsilon\norm Q_F
\le\left(\frac1{n^2}\sum_i\norm{W_i}_F^2\right)^{1/2}
\le\frac R{\sqrt n}.
\]
The shared class is a subset of the dual class, so its Rademacher supremum
cannot exceed the dual one. Together these are exactly the inequalities in
Proposition~\ref{prop:global}.

For completeness, let $\ell$ be an $L$-Lipschitz bounded margin loss. The
constant $\ell(0)$ can be subtracted without changing the deviation of the
empirical risk. Symmetrization bounds the expected uniform deviation of the
centered loss class by twice its Rademacher complexity; contraction bounds
that loss complexity by a constant multiple of $L$ times the linear-score
complexity. Applying Proposition~\ref{prop:global} gives
$O(LBR/\sqrt n)$ for either family
\citep{bartlett2002rademacher}.

This argument cannot produce the strict penalty in Equation~\eqref{eq:k}: it
throws away the leading-$r$ spectrum of $Q$ by upper-bounding it with the full
Frobenius norm. More fundamentally, the boundedness assumptions allow a
degenerate design $W_i=0$ (and nonzero designs of arbitrarily small scale), so
they cannot imply a uniformly positive complexity gap that depends only on
$p$ and $r$. The local theorem adds the regular design and Gaussian
localization needed to expose effective dimension.
\end{proof}

\section{Experimental Protocols and Supporting Evidence}
\label{app:experiments}

\paragraph{Retrieval metric.}
For an evaluated query $q$, let $z_{qj}\in\{0,1\}$ indicate relevance
at rank $j$ and let $R_q>0$ be its number of relevant documents. We report
the mean over test queries of
\[
\operatorname{NDCG@10}(q)=
\frac{\sum_{j=1}^{10}z_{qj}/\log_2(j+1)}
{\sum_{j=1}^{\min(10,R_q)}1/\log_2(j+1)}.
\]

\paragraph{Synthetic retrieval and local calibration (RQ1).}
The two-view retrieval simulation uses ambient dimension $p=32$, latent rank
eight, query and document noise standard deviation $0.6$, and master seed
20260809. The document loading rotates relative to the query loading through
angles $0,0.2,\ldots,1.2$ radians. Training sizes are 32--2048 pairs, fitted
ranks are $4/8/16$, and five paired seeds control training and test draws.
Each positive document is ranked against 299 independent negatives.
Figure~\ref{fig:phase} displays the rank-eight boundary from this grid.

The local-risk calibration uses $p=12$, $r=3$, $\sigma=1$, and
$n\in\{64,256,1024,4096\}$. At each of four directional-signal levels,
20,000 nonlinear-projection draws estimate the two risks; 200,000 Gaussian
sequence draws estimate SURE selection probability. Table~\ref{tab:local-calibration}
reports the $n=4096$ comparisons used in RQ1.

\begin{table}[H]
\caption{Local risk and SURE calibration. Entries are simulation/theory; risks are scaled by $n$.}
\label{tab:local-calibration}
\centering
\small
\begin{tabular}{rrrr}
\toprule
$\delta^2/(\sigma^2k)$ & Shared risk & Dual risk & $\Pr(\widehat\pi=d)$ \\
\midrule
0.0 & 32.97/33 & 63.03/63 & 0.001/0.001 \\
0.5 & 48.05/48 & 63.05/63 & 0.094/0.093 \\
1.0 & 62.93/63 & 62.97/63 & 0.473/0.474 \\
2.0 & 93.10/93 & 63.15/63 & 0.970/0.969 \\
\bottomrule
\end{tabular}
\end{table}

\paragraph{Training-score diagnostic (RQ2).}
The rank-eight simulation compares raw plug-in PSD distance with the known
population distance and its uncertainty-corrected estimate. On frozen real
embeddings, Figure~\ref{fig:noise-correction}B compares raw and corrected
training scores in 840 operator fits from 30 available task--encoder pairs.
The ten tasks are Climate-FEVER, CQA-English, CQA-TeX, DBPedia Entity, FEVER,
HotpotQA, MS MARCO, NQ, TREC-COVID, and Touché-2020. Fits vary rank over
$4/8/16/32$, use three seeds and available nested training sizes, and share
the four encoders specified in Section~\ref{sec:experimental-setting}.
Disjoint held-out relevance moments define the Shared-minus-Dual operator-loss
advantage against which the training-only scores are compared.

\subsection{Real-embedding retrieval and operator-risk protocols}
\label{app:gpu-experiments}

\paragraph{Full-corpus query rotation (RQ3).}
The reported rotation experiment uses FEVER, HotpotQA, NQ, MS MARCO, and
CQA-TeX with BGE-base and GTE-base. Their train/validation/test query counts
and corpus sizes are, respectively, 109,810/6,666/6,666 and 5,416,568;
85,000/5,447/7,405 and 5,233,329; 2,071/691/690 and 2,681,468;
502,939/6,980/43 and 8,841,823; and 1,744/581/581 and 68,184.
For each dataset--encoder pair, eight relevance-informed planes define query
rotations at $0^\circ,15^\circ,\ldots,90^\circ$. Documents, relevance labels,
and query splits remain fixed across angles.

Shared and Dual rank-16 adapters use the same 1,024 training queries, three
paired seeds, epoch permutations, in-batch negatives, and eight frozen hard
negatives. AdamW runs for 30 epochs with batch size 256, temperature $0.05$,
weight decay $10^{-5}$, and gradient norm 5. Validation NDCG@10 chooses the
learning rate from $\{3\times10^{-4},10^{-3},3\times10^{-3}\}$ and the
checkpoint for each family. Test scores use exact float32 full-corpus
inner-product search. These are the conditions plotted in
Figure~\ref{fig:controlled-real}.

\paragraph{Rank-by-data retrieval (RQ3).}
The four grids in Figure~\ref{fig:rank-sweep} are MS MARCO with GTE-base,
BGE-base, and Contriever-MSMARCO at $90^\circ$, and NQ with GTE-base at
$75^\circ$. Each grid crosses seven training sizes
$\{32,64,128,256,512,1024,2048\}$ with ranks $\{4,8,16,32\}$ and three
paired seeds. Training lasts 30 epochs at learning rate $3\times10^{-4}$;
validation selects the checkpoint and the test ranks the complete corpus.

\paragraph{Held-out operator risk (RQ4).}
The 168 comparable sample-size trajectories in
Figures~\ref{fig:operator-all-curves} and~\ref{fig:operator-all-curves-second}
come from FEVER, NQ, ArguAna, and SciFact with all four base encoders and
three seeds. SciFact (647/162/300 train/validation/test queries, 5,183
documents) and FEVER use their official splits. NQ and ArguAna use
deterministic adaptation splits of 2,071/691/690 over 2,681,468 documents
and 844/281/281 over 8,674 documents, respectively. One ArguAna test qrel
has no released document, leaving 280 evaluable test queries. The available
ranks are $4/8/16/32$; ArguAna and SciFact additionally use
$64/128/256$.

Encoder-prescribed query and passage prefixes are applied, texts are truncated
at 256 tokens, and normalized 768-dimensional embeddings are cached in
float16. For each query, the first labeled positive and highest-ranked
unlabeled raw candidate form a relevance triple. The held-out triples define
a disjoint target moment. Truncated SVD and positive eigentruncation of the
training moment give the Dual and Shared operators; squared Frobenius errors
are divided by the squared norm of the held-out target. Each trajectory fixes
dataset, encoder, rank, and seed and connects only measured nested sample
sizes.

\begin{figure}[p]
\centering
\includegraphics[width=\linewidth]{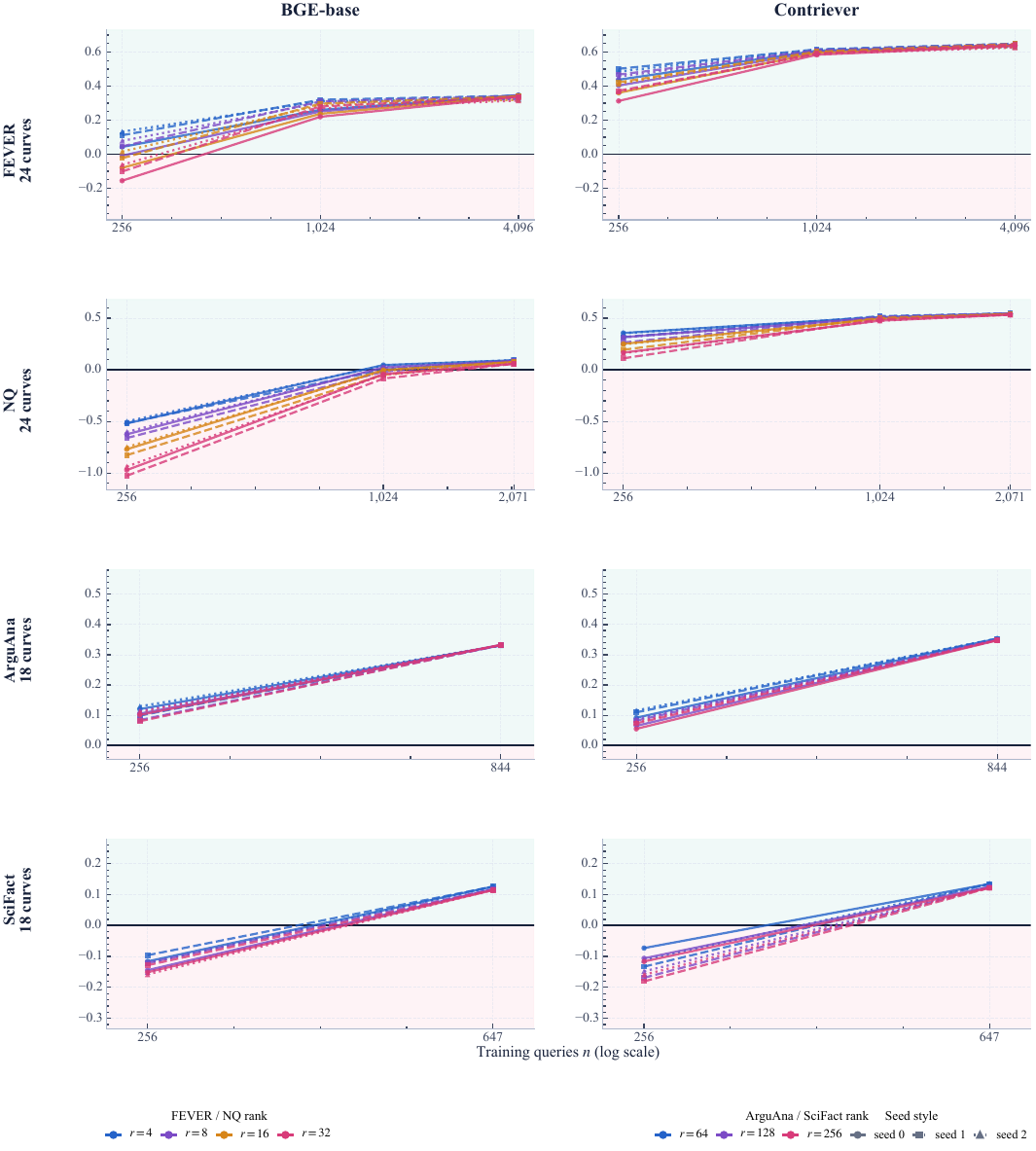}
\caption{Held-out operator-risk trajectories for BGE-base and Contriever
(84 curves). Rows are FEVER and NQ (24 curves each), followed by ArguAna and
SciFact (18 each). Each line fixes a dataset, encoder, rank, and seed and
connects measured nested training sizes. Color denotes rank; line style
denotes seed. Positive normalized Shared-minus-Dual loss favors Dual.}
\label{fig:operator-all-curves}
\end{figure}

\begin{figure}[p]
\centering
\includegraphics[width=\linewidth]{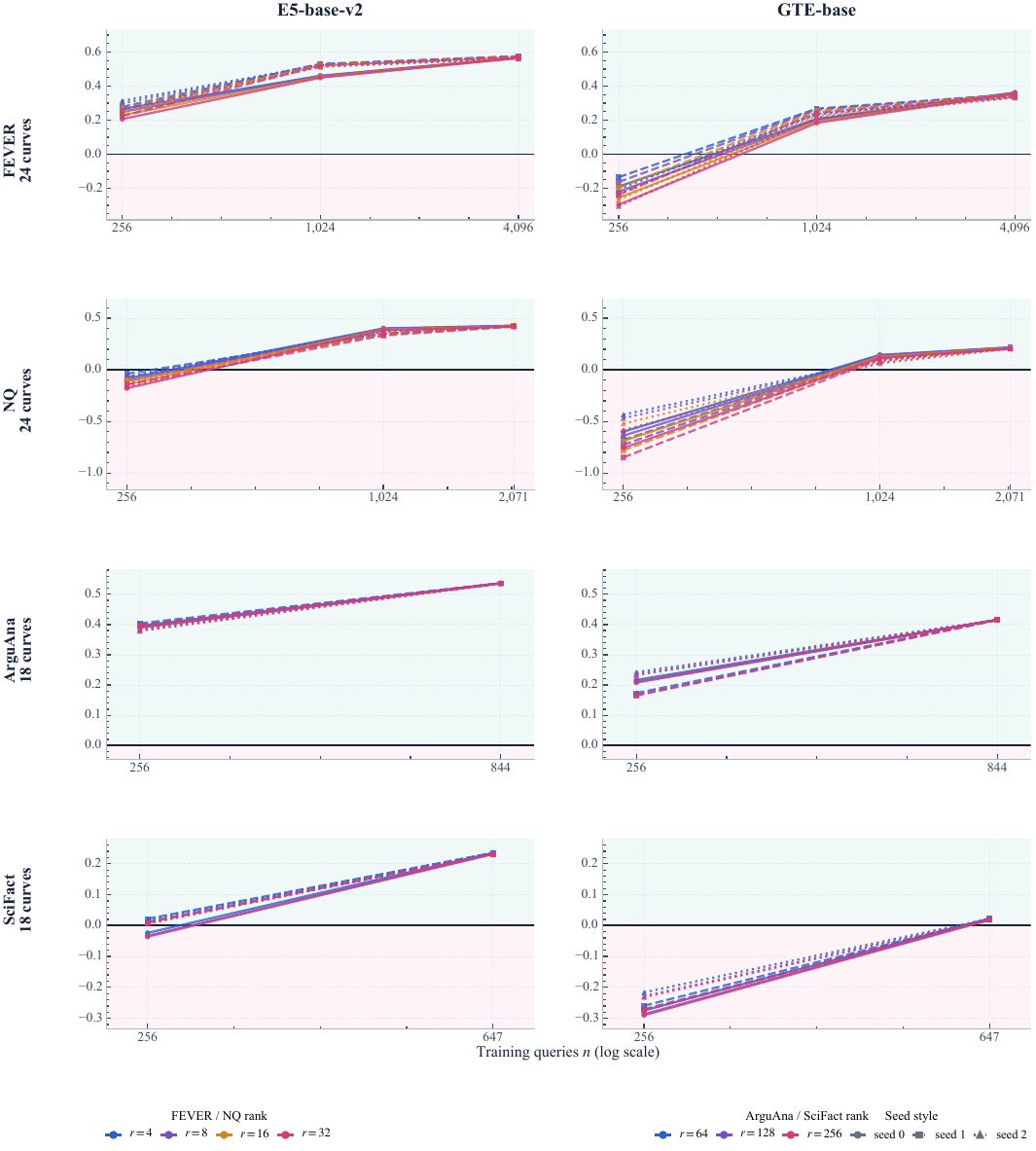}
\caption{The remaining 84 held-out operator-risk trajectories, for E5-base-v2
and GTE-base. Grouping, axes, colors, and line styles match
Figure~\ref{fig:operator-all-curves}.}
\label{fig:operator-all-curves-second}
\end{figure}

\paragraph{Five-fold geometry selection (RQ4).}
The evaluation behind Table~\ref{tab:ucgs-regret} uses ArguAna, MS MARCO,
FEVER, NQ, and SciFact with all four base encoders. For each dataset, labeled
training-query IDs are sorted, permuted with seed 20260923, and partitioned
into five balanced held-out folds shared across encoders. Each fold's target
moment uses only held-out queries; the remaining four folds supply training
queries. A fold-specific permutation with seed $20260924+\text{fold}$
provides nested samples of sizes $32,64,128,256,512,1024,2048,4096$ when
available, plus the full remaining pool when it has at most 4096 queries.
All runs use ranks $4/8/16/32$, a fixed 256-dimensional selector projection,
and 20 internal half-splits. The 2,800 configuration evaluations are nested
within 100 encoder--dataset outer folds; regret is averaged across
sample-size--rank cells, then held-out folds, as specified in
Section~\ref{sec:experimental-setting}.
\section{Additional related work}\label{app:related-work}
Hard-negative mining and data augmentation are central to dense retrievers
\citep{xiong2021ance,qu2021rocketqa}; domain-adaptive and expansion variants
include Contriever, GPL, COCO-DR, and HyDE
\citep{izacard2022contriever,wang2022gpl,yu2022cocodr,gao2023hyde}.
Query-side anisotropy, compression, and frozen-embedding adapters address
different aspects of representation mismatch
\citep{li2021dance,ma2021compression,yoon2024search,yoon2024matryoshka}.
CCA and deep CCA learn two-view directions
\citep{hotelling1936cca,bach2005pcca,andrew2013dcca,dorfer2018cca};
perturbation theory studies their sampling stability
\citep{davis1970rotation,gao2019stochasticcca,cai2018wedin}.
Low-rank and nearest-PSD projections
\citep{eckart1936approximation,higham1988nearest}, fixed-rank manifolds
\citep{vandereycken2013lowrank,vandereycken2013psd}, and risk estimation
\citep{gavish2017optimal,mazumder2020degrees} supply ingredients for our
operator analysis. Pairwise-ranking theory and NDCG address a different
order-based objective \citep{burges2005ranknet,clemencon2009ranking,jarvelin2002ndcg}.

\section{Identifiability and boundary cases}\label{sec:identifiability}
\paragraph{Marginal mismatch is neither necessary nor sufficient.}
If $Y=cX$ with $c>0$ and scoring uses cosine, query and document covariances
differ by $c^2$ while every paired cosine and ranking is unchanged.
Conversely, if $X\sim\mathcal N(0,I)$ and $Y=RX$ for a nonsymmetric
orthogonal rotation $R$, the marginals agree but the cross-covariance is
generally outside the PSD cone. Relevance depends on paired structure.

\paragraph{A bilinear operator does not determine projected dual cosine.}
For invertible $Q$, $(A,B)$ and $(QA,Q^{-\top}B)$ preserve $A^\top B$ but
change projected norms. Take $M=I$, $x=(1,1)$, $d_1=(1,0)$, and
$d_2=(.9,.9)$. With $A=B=I$, projected cosine ranks $d_2$ above $d_1$;
with $A=\operatorname{diag}(10,1)$ and
$B=\operatorname{diag}(.1,1)$, still $A^\top B=I$, but the ranking reverses.
The same invariance changes $\norm{A-B}_F$ arbitrarily. Operator-level
results require bilinear scoring or additional norm control.

\end{document}